
\documentclass[11pt]{article}
\usepackage{times}
\usepackage{setspace}
\usepackage{amsfonts}
\usepackage{amsmath}
\usepackage{amsthm}
\usepackage{amssymb}
\usepackage{oldgerm}
\usepackage[]{graphicx}

\newcounter{aid}

\theoremstyle{plain}
\newtheorem{theorem}{Theorem}
\newtheorem{corollary}{Corollary}

\newtheorem{lemma}{Lemma}
\newtheorem{definition}{Definition}

\theoremstyle{definition}
\newtheorem{remark}{Remark}

\def\qed{\hfill $\square$}

\DeclareMathOperator{\supp}{supp}
\DeclareMathOperator{\card}{card}

\def\R{\mathbb{R}}
\def\N{\mathbb{N}}

\setlength{\oddsidemargin}{0.1in}
\setlength{\evensidemargin}{0.1in}
\setlength{\textwidth}{6.3in} 
\setlength{\topmargin}{-0.5in}
\setlength{\textheight}{8.7in} 
\setlength{\topmargin}{-0.5in}

\begin{document}

\bibliographystyle{ieeetralpha}

\title{Entropy Density and Mismatch in  High-Rate  Scalar\\
  Quantization with R\'enyi Entropy Constraint} 

{\renewcommand{\thefootnote}{}
\footnotetext{\hspace{-0.3cm} W.\ Kreitmeier is with the 
Department of Informatics and Mathematics, University of Passau, 
Innstra\ss e 33, 94032 Passau, Germany (email:
\texttt{wolfgang.kreitmeier@uni-passau.de}). 
T.  Linder is with the Department of
Mathematics and Statistics, Queen's University, Kingston, Ontario,
Canada K7L 3N6 (email: {\tt linder@mast.queensu.ca}). }

\footnotetext{\hspace{-0.3cm} This research was supported in part by
  the German Research Foundation (DFG) and the Natural Sciences and
  Engineering Research Council (NSERC) of Canada.} }

\author{Wolfgang Kreitmeier and Tam\'as Linder} 
\date{\today}

\maketitle

\begin{abstract}  
Properties of scalar quantization with $r$th power distortion and
constrained R\'enyi entropy of order $\alpha\in (0,1)$ are
investigated.  For an asymptotically (high-rate) optimal sequence of
quantizers, the contribution to the R\'enyi entropy due to source
values in a fixed interval is identified in terms of the ``entropy
density'' of the quantizer sequence.  This extends results related to
the well-known point density concept in optimal fixed-rate
quantization. A dual of the entropy density result quantifies the
distortion contribution of a given interval to the overall
distortion. The distortion loss resulting from a mismatch of source
densities in the design of an asymptotically optimal sequence of
quantizers is also determined. This extends Bucklew's fixed-rate
($\alpha=0$) and Gray \emph{et al.}'s variable-rate ($\alpha=1$)
mismatch results to general values of the entropy order parameter
$\alpha$.

\end{abstract}

\medskip
\noindent
{\bf Index Terms:} Asymptotic quantization theory, distortion density,
entropy density, quantizer mismatch, R\'enyi-entropy.

\allowdisplaybreaks

\section{Introduction}

Asymptotic quantization theory studies the performance of quantizers
of a fixed dimension in the limit of high rates (low distortion). This
approach complements Shannon's rate-distortion theory where optimal
codes of a fixed rate (distortion) are investigated as the dimension
becomes asymptotically large.  Panter and
Dite \cite{PaDi51} were the first to derive a formula for the mean
square distortion of  optimum scalar quantizers as the number of
quantization levels becomes asymptotically large.  Zador's classic
work \cite{Zad63} for vector quantizers 
determined the asymptotic behavior of the minimum quantizer distortion
under a constraint on either the log-cardinality of the quantizer
codebook (fixed-rate quantization) or the Shannon entropy of the
quantizer output (entropy-constrained quantization).  Zador's results
were later clarified and generalized by Bucklew and Wise \cite{BuWi82}
and Graf and Luschgy \cite{GrLu00} for the fixed-rate case, and by
Gray \emph{et al}.\ \cite{GrLiLi02} for the entropy-constrained case.
Gray and Neuhoff \cite{GrNe98} provide a historical overview of
related results.

One way to unify and extend the fixed and variable-rate results is to
define the quantizer's rate by the R\'enyi entropy
of order $\alpha$ of its output. This generalized rate concept  includes the
traditional rate definitions as special cases since $\alpha=0$
corresponds to fixed-rate quantization, while setting $\alpha=1$
yields variable-rate quantization. This approach was first suggested
in \cite{GrLiGi08} as an alternative to the Lagrangian rate definition
considered there which simultaneously controls codebook size and output
(Shannon) entropy. Further motivation for using R\'enyi entropy as
quantization rate can be obtained from axiomatic considerations
\cite{Ren60b,AcDa75}, as well as from the operational role of the
R\'enyi entropy in variable-length lossless coding
\cite{Cam65,Jel68,Bae03}.

The theory of quantization with R\'enyi $\alpha$-entropy constraint
has recently been explored in \cite{Kre10a,Kre10b,Kre11,KrLi11}.  In
particular, \cite{Kre10b} derived the sharp asymptotic behavior of the
$r$th power distortion of optimal $d$-dimensional vector quantizers
for $\alpha \ge 1+r/d$.  In \cite{KrLi11} the technically more
challenging $\alpha < 1$ case was considered and the asymptotically
optimal $r$th power distortion was determined for scalar quantization
($d=1$) and a fairly large class of source densities. Thus, at least
for scalar quantization, only the case $\alpha \in (1,1+r)$ remains
open, and it is conjectured in \cite{KrLi11} that the main result
there remains valid in this range of the parameter $\alpha$.

In addition to the asymptotic behavior of the optimal quantizer
performance, asymptotic quantization theory has also been concerned
with more subtle properties of (asymptotically) optimal
quantizers. One such property is the existence, for a sequence of
quantizers, of the so-called quantizer point density function, loosely
defined as a probability density which, when integrated over a region,
gives the fraction of the quantization levels contained in that
region. More formally, a point density, if exists, is the
probability density function of the limit distribution of the output
levels of a sequence of quantizers.  Point densities and the closely
related companding quantizers have been instrumental in the early
pioneering investigations into optimal scalar and vector quantization
\cite{Ben48,PaDi51,Llo57,Ger79} (see also \cite{NaNe95} for a rigorous
reformulation of Bennett's result for the vector case and
\cite{GrNe98} for the history of these results). Bucklew \cite{Buc84}
was the first to rigorously establish the existence of the point
density function for an asymptotically optimal sequence of fixed-rate
quantizers.  To our knowledge, no such rigorous result is known for
variable-rate quantization.  The concept of quantizer point density
has been very useful in analyzing the performance of quantizers in a
distributed setting (e.g. \cite{TiHe04,MiGoVa11}).

Asymptotic quantization theory has also been successful in providing
mismatch results that quantify the loss in performance when a sequence
of quantizers that is asymptotically optimal for one source is applied to a
different source. Mismatch results are theoretically important and  in
practice they may provide a means for quantifying the performance of
code designs that are based on source models estimated from data.
For fixed-rate vector quantization Bucklew~\cite{Buc84} was the first to
prove such a rigorous mismatch result. The variable-rate analog of
this result was proved in \cite{GrLi03} where connections with
mismatch results in rate-distortion theory and robust lossy coding
were also pointed out. More recently, Na \cite{Na11} determined sharp
asymptotic formulas for variance-mismatched scalar quantization of
Laplacian sources. 

In this paper we extend some of the more refined results of fixed and
variable-rate asymptotic quantization theory in the framework of
quantization with R\'enyi entropy constraint of order $\alpha\in
(0,1)$. The concept of a quantizer point density is a problematic one
for (R\'enyi) entropy-constrained quantization since (near) optimal
quantizers can have an arbitrarily large number of levels in any
bounded region. Instead, we investigate the R\'enyi entropy
contribution of a given interval to the overall rate. One of our main
results, Theorem~\ref{theoremdens}, shows that for a large class of
source densities and an asymptotically optimal sequence of quantizers,
this contribution can be quantified by the so called entropy density
of the sequence. A dual of this result, Corollary~\ref{corollary123},
quantifies the distortion contribution of a given region to the
overall distortion in terms of the so-called distortion
density. Interestingly, it turns out that the entropy and distortion
densities are equal in the cases we investigate (Remark~\ref{rem_coinc}).  Our other main
contribution, Theorem~\ref{mresul}, is a mismatch formula for a
sequence of asymptotically optimal R\'enyi entropy constrained scalar
quantizers. From our density and mismatch results we can recover the
known results for the traditional rate definitions by formally setting
$\alpha=0$ or $\alpha=1$.

The rest of the paper is organized as follows. In the next section we
formulate the quantization problem and give a somewhat  informal
overview of our results 
in the context of prior work. In Section~\ref{sec_density} the entropy
and distortion density results are presented and proved. The
mismatch problem is 
considered in Section~\ref{sec_mismatch}. Concluding remarks are given
in Section~\ref{sec_concl}.

\section{Preliminaries and overview of results}

\label{sec_prelim}

\subsection{R\'enyi entropy and quantization}

We begin with the definition of R\'enyi entropy of order
$\alpha$. 
Let $\mathbb{N} := \{1,2,\ldots \}$  and  let 
$p=(p_{1},p_{2},\ldots) \in [0,1]^{\mathbb{N}}$ be a probability vector,
i.e. $\sum_{i=1}^{\infty} p_{i} = 1$. 
For any $\alpha\in (0,\infty)\setminus \{1\}$, the R\'enyi entropy
  of order $\alpha$,  
$\hat{H}^{\alpha}(p) \in [0,\infty ]$,
is defined as (see \cite{Ren60b} or  \cite{AcDa75})  
\[
\hat{H}^{\alpha}(p)=
\frac{1}{1-\alpha} \log \left(  \sum\limits_{i=1}^{\infty}
p_{i}^{\alpha} \right).
\]

\begin{remark}
\label{remarkhospital}
All logarithms in this paper have base $e$. Setting $0^0:=0$, we can extend the definition to $\alpha=0$, obtaining
\begin{equation}
\label{eq_h0def}
\hat{H}^{0}(p) = \log \left(  \card \{ i \in \mathbb{N}: p_{i} > 0 \} \right)
\end{equation}
where $\card$ denotes cardinality. Also, 
using the convention $0\log 0:=0$, it is easy to see that letting $\alpha\to 1$
yields  the regular (Shannon) entropy of
$p$:
\[
\hat{H}^{1}(p):=\lim_{\alpha \rightarrow 1} \hat{H}^{\alpha}(p )  =-\sum_{i=1}^{\infty} p_i\log p_i
\]
assuming $\hat{H}^{\alpha}(p)$ is finite for some  $\alpha<1$.
\end{remark}

Let $X$ be a real-valued random variable with distribution $\mu$.  Let
$\mathbb{I} \subset \mathbb{N}$ be an index set (thus $\mathbb{I}$ is
either  finite or countably infinite) and $\mathcal{S} = \{ S_{i} : i \in \mathbb{I} \}$ a  Borel
measurable partition of the real line $\mathbb{R}$. Moreover let
$\mathcal{C} = \{ c_{i} : i \in \mathbb{I} \}$ be set of distinct
points in $\mathbb{R}$. Then $(S_i,c_i)_{i\in \mathbb{I}}$ defines a
(scalar) \emph{quantizer} $q : \mathbb{R} \rightarrow \mathcal{C} $
such that
\[
q(x) = c_{i} \qquad \text{ if and only if } \qquad x \in S_{i} .
\]  
We call $\mathcal{C}$ the \emph{codebook} and the $c_i$ the
codepoints (or quantization levels). Each $S_{i} \in \mathcal{S}$ is called
\emph{codecell}. Clearly  $\mathcal{C} = q (\mathbb{R})$  is the
range of $q$ and 
\[
\mathcal{S} = \{ q^{-1}(z) : z \in q(\mathbb{R})  \}
\]
where $q^{-1}(z)=\{x\in \R: q(x)=z\}$.
Let $\mathcal{Q}$ denote the set of scalar quantizers,
i.e., the set of all Borel-measurable mappings $q : \mathbb{R}
\rightarrow \mathbb{R}$ with a countable range.  The discrete random variable $q(X)$ is a quantized
version of the random variable $X$. With any enumeration $\{ i_{1}, i_{2}, \ldots \}$ of $\mathbb{I}$ we define
\[
H^{\alpha}_{\mu }(q) = 
\hat{H}^{\alpha } ( \mu ( S_{i_{1}} ), \mu ( S_{i_{2}} ),\ldots ) 
\]
as the R\'enyi entropy of order $\alpha$ of $q$ with respect to $\mu$.
Thus  $H^{0}_{\mu }(q)$ is the log-cardinality of the codebook
of $q$ (we assume without loss of generality that each codecell of $q$
has positive probability)  and $H^{1}_{\mu }(q)$ is the Shannon entropy of the quantizer
output.

For $r\ge 1$  and  $q \in \mathcal{Q}$ 
we measure the approximation  error between $X$ and $q(X)$  by the
   the $r$th power distortion defined by 
\[
D_{\mu}(q)= E | X-q(X)|^r  =
\int | x-q(x)|^r \,  d \mu (x).
\]
For any $R \geq 0$ we define
\[
D_{\mu}^{\alpha }(R)= \inf \{ D_{\mu}(q) : q \in \mathcal{Q},
H^{\alpha }_{\mu }(q) \leq R \} 
\]
the optimal quantization distortion of $\mu$ under R\'enyi entropy
constraint $R$. We call a quantizer $q$ optimal for $\mu$ under the entropy constraint
$R$ if $D_{\mu}(q)=D_{\mu}^{\alpha }(R)$ and $H^{\alpha }_{\mu }(q)
\le R$.  In particular, $D_{\mu}^0(R)$
is the minimum distortion of any quantizer with codebook size not exceeding $e^R$, while $D_{\mu}^1(R)$ is the minimum distortion under
Shannon entropy constraint $R$.

In  the rest of this paper  all distributions to be quantized will be
absolutely continuous with respect to the Lebesgue measure $\lambda$
on the real line. If such a distribution   $\mu$
has probability density function  $g$, then we will use the
notation $\mu=g\lambda$. 
We denote by $\supp (\mu )$ the support of $\mu$ (the smallest closed
set whose complement has zero $\mu$ measure). If $\mu=g\lambda$, then
we define $\supp(g)=\supp(\mu)$.  We will
also assume throughout the paper that the $r$th moment of $\mu$
is finite, i.e. $\int |x|^{r} \, d \mu (x) < \infty$. This condition
is sufficient (but not necessary) for $D_{\mu}^{\alpha}(R)$ to be
finite for all $R\ge 0$. 

It has been shown in \cite{Kre11} that under the above conditions, the
set of all quantizers $\mathcal{Q}$ in the definition of
$D_{\mu}^{\alpha}$ can be replaced by the set of quantizers having
finitely many codecells, each of which is an interval.  In view of
this, we will assume throughout the whole paper that the codecells of
every quantizer $q \in \mathcal{Q}$  are intervals (but we do not
restrict the number of codecells to be finite) and each codepoint  is
contained 
in the interior of the associated codecell. 

\subsection{Asymptotic optimality and conditional distributions}

\label{seq_asymptcond}

The main result of \cite{KrLi11} implies that under suitable
assumptions on the source density $g$, for all $\alpha\le 1$, 
\begin{equation}
 \label{eq_mainasympt}
   \lim_{R\to \infty} e^{rR} D_{\mu}^{\alpha}(R) 
=  \frac{1}{(1+r)2^r}  e^{r  h^{\beta_1}(g)}
\end{equation}
where  $ \beta_{1} = \frac{1- \alpha + \alpha r}{1- \alpha +
  r}$  and
\[
h^{\beta_1}(g)=
\frac{1}{1 - \beta_1} \log \biggl( \int  g^{\beta_1 } \,
d\lambda \biggr)
\]
is the R\'enyi differential entropy
of order $\beta_1$ of $g$.

We can formally recover Zador's classical
results \cite{Zad63} in the scalar setting from (\ref{eq_mainasympt}).
Letting $\alpha=0$, we have $\beta_1= \frac{1}{1+r}$ and $ e^{r
  h^{\beta_1}(g)} =(\int g^{\frac{1}{1+r}} \,
d\lambda)^{1+r} = \|g\|_{\frac{1}{1+r}}$, yielding Zador's formula
for fixed-rate scalar quantization. For $\alpha=1$, we have $\beta_1=
1$ and $e^{r h^{\beta_1}(g)} = e^{r h(g)}$, where $h(g)=-\int g\log
g\, d\lambda$ is the
Shannon differential entropy of $g$, and (\ref{eq_mainasympt})
becomes Zador's formula for variable-rate scalar quantization.
In view of (\ref{eq_mainasympt}) we call a sequence of quantizers
$(q_n)_{n\in \N}$  \emph{asymptotically optimal} if
$H^{\alpha }_{\mu }(q_n) \to \infty$ and
\[
\lim_{n\to \infty} e^{rH^{\alpha }_{\mu }(q_n)}  D_{\mu}(q_n) =
\frac{1}{(1+r)2^r}  e^{r  h^{\beta_1}(g)}.
\]

Suppose  $I$ is a bounded interval with  positive $\mu$
probability. We denote by
$\mu(\cdot|I)$  the conditional distribution for $\mu$ given
$I$ and  by $g_I$ the corresponding conditional density (so that
$\mu(\cdot|I) =g_I\lambda$).  We show in Theorem~\ref{theoremdens}
that for $\alpha\in (0,1)$ 
any quantizer sequence $(q_n)$ that is asymptotically optimal for
$\mu$ is also asymptotically optimal for $\mu(\cdot|I)$, i.e., 
\begin{equation}
\label{eq_condasympopt}
\lim_{n\to \infty} e^{rH^{\alpha }_{\mu(\cdot|I) }(q_n)}
D_{\mu(\cdot|I)}(q_n) = \frac{1}{(1+r)2^r} e^{r h^{\beta_1}(g_I)}.
\end{equation}
Although this result is not very surprising, it will be very useful in
establishing further, more subtle properties of asymptotically optimal
quantizers.

\subsection{Entropy and distortion densities}

\label{sec_entrdist}

Let 
$N_n(I)$ denote the  number of codepoints of
$q_n$  contained in  an interval $I$. Let $\alpha=0$ and let $(q_n)$ be a sequence
of asymptotically 
optimal $n$-level quantizers (so that $H^0_{\mu}(q_n)=\log
n$). Specialized to the scalar case, one
 important result of Bucklew \cite{Buc84} shows that 
\begin{equation}
\label{eq_pointdensbucklew}
\lim_{n\to \infty} \frac{N_n(I)}{n} = \frac{\int_I g^{\frac{1}{1+r}}
  \, d\lambda }{\int_{\R}
  g^{\frac{1}{1+r}} \, d\lambda}.
\end{equation}
Thus the probability density    $g^{\frac{1}{1+r}}/ \int
g^{\frac{1}{1+r}} \, d\lambda $ can be interpreted as the point
density function for the codepoints of asymptotically optimal
quantizers (see also \cite[Thm.~7.5]{GrLu00}). Point densities are
useful in gaining insight into the structure of (asymptotically)
optimal quantizers and can be used to construct such quantizers
via a companding construction. 

Unfortunately, no rigorous point density results are known for
$\alpha=1$.  In fact, even the definition of a point density function
is problematic for entropy-constrained quantization since for sources
with a density, at any rate $R>0$ there exist near-optimal quantizers
that have an arbitrarily large  number of  codepoints contained in a given
bounded interval. Thus an analog of (\ref{eq_pointdensbucklew}) cannot hold
for an arbitrary  sequence of asymptotically optimal quantizers, although
heuristic arguments indicate that under some structural restrictions
asymptotically optimal variable-rate quantizers have a uniform point
density (see, e.g., \cite{GiPi68,Ger79}).

To define a tractable analog of the point density function, recall
that 
$\mu(\cdot|I)$ denotes the conditional distribution for $\mu$ given
$I$. In view of (\ref{eq_h0def}), we have $N_n(I)=
e^{H^0_{\mu(\cdot|I)}(q_n)}$ and $n=e^{H^0_{\mu}(q_n)}$. Thus the fraction
of codepoints contained in $I$ on the left hand side of
(\ref{eq_pointdensbucklew}) can be rewritten as
\begin{equation}
\label{eq_contr}
 \frac{N_n(I)}{n} =  \frac{e^{H^0_{\mu(\cdot|I)}(q_n)}}{e^{H^0_{\mu}(q_n)}}.
\end{equation}
This ratio represents the relative contribution of the interval $I$ to
the total R\'enyi entropy of order $\alpha=0$. 

The interpretation in \eqref{eq_contr} motivates us to define the
R\'enyi entropy
contribution of an interval $I$ in a similar way for general $\alpha$. 
In Theorem~\ref{theoremdens}, we identify the  limit 
of this  entropy contribution: Under appropriate 
conditions on the source density, for any $\alpha\in (0,1)$ 
and  asymptotically optimal sequence $(q_n)$,  we have
\begin{equation}
\label{propentrloc0}
\lim_{n\to \infty} \frac{e^{(1 - \alpha ) H_{\mu ( \cdot | I
      )}^{\alpha }(q_{n})}} 
{e^{(1 - \alpha ) H_{\mu }^{\alpha }(q_{n})}} =   \frac{ \int_{I}  
g^{\beta_{1}} \, d \lambda  }{ \int_{\mathbb{R}}  g^{\beta_{1}}\, d \lambda }\mu ( I )^{- \alpha }.
\end{equation}
It is easy to see that (\ref{propentrloc0}) reduces to the traditional
point density result (\ref{eq_pointdensbucklew}) for  $\alpha=0$. 

In Corollary~\ref{corollary123} we present an almost immediate
consequence of (\ref{propentrloc0}) and 
(\ref{eq_condasympopt}) which  concerns the
distortion contribution of an arbitrary finite interval $I$:
\[
 \lim_{n\to\infty} \frac{\int_{I}|x-q_{n}(x)|^{r}\, \mu(dx)}{
   D_{\mu}(q_n) }=
  \frac{ \int_{I}  
g^{\beta_{1}} \, d \lambda  }{ \int_{\mathbb{R}}  g^{\beta_{1}}\, d
    \lambda } .
\]
Thus the probability density  $g^{\beta_1}/ \int g^{\beta_1} \,
d\lambda $ can be interpreted as either the (R\'enyi) \emph{entropy density} or
the \emph{distortion density} of any asymptotically optimal quantizer
sequence  $(q_n)$.

\subsection{Mismatch}

\label{subsec_mismatch}
For scalar quantization Bucklew's fixed-rate mismatch result
\cite[Thm.\ 2]{Buc84} can be stated as follows: If a sequence of $n$-level
quantizers  $(q_n)$ that is  asymptotically optimal for a source
with distribution $\mu= g\lambda$ is applied  to a source with
distribution $\nu=f\lambda$, then (under some assumptions on $g$ and $f$)
\[
\lim_{n\to \infty} n^r D_{\nu}(q_n) =   \frac{1}{(1+r)2^r} \int
\frac{f}{g_*^r} \, d\lambda
\]
where $g_* = g^{\frac{1}{1+r}}/ \int g^{\frac{1}{1+r}} \, d\lambda $
is the optimal point density function for $\mu=g\lambda$ from
(\ref{eq_pointdensbucklew}). This is a generalization of a classical
result of Bennett \cite{Ben48} who considered  companding quantization and mean
square distortion. The integral on the right hand side is often called
Bennett's integral.  In view of (\ref{eq_pointdensbucklew}),
and after some calculations, we obtain  that the asymptotic
performance loss due to mismatch is \vspace{0.4cm}
\begin{equation}
\label{eq_0loss}
  \lim_{n\to \infty} \frac{ D_{\nu}(q_n)}{ D^{0}_{\nu}(\log n)} 
=  e^{r\,  \mathcal{D}_{1+r}(f_*\|g_*)}
\end{equation} 
where  $f_* =  f^{\frac{1}{1+r}}/ \int f^{\frac{1}{1+r}} \, d\lambda $
is the optimal point density for $\nu=f\lambda$ and
\begin{equation}
\label{renyi-div}
\mathcal{D}_{\alpha}(u\|v) = \frac{1}{\alpha-1} \log\biggl( \int
u^{\alpha } v^{1-\alpha} \, d\lambda \biggr)
\end{equation}
denotes the R\'enyi divergence
of order $\alpha\neq 1$ between densities $u$ and $v$. (Thus the loss is
always greater than one  unless $\mu=\nu$).

For the entropy-constrained case the main result of \cite{GrLi03} implies that if $(q_n)$ is
asymptotically optimal for $\mu=g\lambda$, but it is used for
$\nu=f\lambda$, then 
\[
  \lim_{n\to \infty} e^{rH^1_{\nu}(q_n)} D_{\nu}(q_n)=
 \frac{1}{(1+r)2^r}  e^{r  h^{1}(f)}  e^{r\,  \mathcal{D}_1(f\|g)}.
\]
Here  $ \mathcal{D}_1(f\|g)=\mathcal{D}(f\|g)= \int f \log  \frac{f}{g}\,
  d\lambda $ is the Kullback-Leibler divergence (relative entropy)
  between $f$ and $g$. From (\ref{eq_pointdensbucklew}) the loss due
  to mismatch is
\begin{equation}
\label{eq_1loss}
  \lim_{n\to \infty} \frac{ D_{\nu}(q_n)}{ D^{1}_{\nu}(H^1_{\nu}(q_n))} 
=  e^{r\,  \mathcal{D}_{1}(f\|g)}.
\end{equation}

In Theorem~\ref{mresul} we present a result on mismatch for quantization with
constrained R\'enyi entropy of order $\alpha\in (0,1)$. The result
states that if
$(q_n)$ is asymptotically optimal for $\mu=g\lambda$, but is applied
to $\nu=f\lambda$, then
\[
  \lim_{n\to \infty} e^{r H^{\alpha}_{\nu}(q_n)} D_{\nu}(q_n)=
 \frac{1}{(1+r)2^r}  e^{-r
   \mathcal{D}_{\alpha}(f\|g_{\alpha,r})} \int
 \frac{f}{\bigl(g_{\alpha,r}\bigr)^r} \, d\lambda 
\]
where 
\begin{equation}
\label{g_alpha_r}
g_{\alpha,r} =\frac{g^{\frac{1}{\beta_2}}}{\int
    g^{\frac{1}{\beta_2}}\, d\lambda}
\end{equation}
 with  $\beta_{2} =
\frac{1-\alpha + r}{1 - \alpha}$ (note that $g_{0,r}=g_*$). 
The loss due to mismatch can be expressed as 
\begin{equation}
\label{eq_alphaloss}
  \lim_{n\to \infty} \frac{ D_{\nu}(q_n)}{ D^{\alpha}_{\nu}(H^{\alpha}_{\nu}(q_n))} 
=   \frac{e^{r(
   \mathcal{D}_{1+r}(f_{0,r}\|g_{\alpha,r}) -
   \mathcal{D}_{\alpha}(f\|g_{\alpha,r}))}}{ e^{r(
   \mathcal{D}_{1+r}(f_{0,r}\|f_{\alpha,r}) -
   \mathcal{D}_{\alpha}(f\|f_{\alpha,r}))}}. 
\end{equation}
The loss can be seen to be always greater than one unless
$\mu=\nu$ (see Remark~\ref{rem_mismatch} following Theorem~\ref{mresul}). Setting formally $\alpha=0$
or $\alpha=1$ (or, more 
precisely, letting $\alpha\downarrow 0$ or $\alpha\uparrow 1$) in the
above formula yields the known cases (\ref{eq_0loss}) and~(\ref{eq_1loss}).

\section{Entropy density and related results} 

\label{sec_density}

Throughout this section we assume that $\mu=g\lambda$.
For $r\ge 1$ and $\alpha \in [0, 1+r)\setminus \{1\}$ let 
\begin{equation}
\label{defpar1}
\beta_{1} =
\frac{1-\alpha + \alpha r}{1 - \alpha + r}, \qquad
\beta_{2} =
\frac{1-\alpha + r}{1 - \alpha}.
\end{equation}

\begin{definition}
Let $ C(r)=\frac{1}{2^{r}(1+r)}$ and
define, for $\alpha \in [0, 1+r) \setminus \{ 1 \}$, 
\[
Q_{\alpha, r}(\mu ) = 
C(r)\left( \int g^{\beta_{1}}\, d \lambda\right)^{\beta_{2}}
\label{defqcoeff}
\]
whenever the integral is finite. Note that $Q_{\alpha, r}(\mu)
\in {} (0, \infty )$.
We call 
$Q_{\alpha, r}(\mu)$ the \emph{quantization coefficient} of $\mu$. 
\end{definition}

\begin{definition}
\label{def_weak_unimod}
A one-dimensional probability density function $g$ is called \emph{weakly
unimodal} if it is continuous on its support and there exists an
$l_{0}>0$ such that $\{x: g(x)\ge l\}$ is a compact interval for every
$l \in (0, l_{0})$.
\end{definition}

Note that every weakly unimodal density is bounded and its support is
a (possibly unbounded) interval. Clearly, all continuous unimodal
densities are weakly unimodal. The class of weakly unimodal
densities includes many  parametric source density classes commonly
used in modeling information sources such as exponential, Laplacian,
Gaussian, generalized Gaussian, and all bounded 
gamma and beta densities.

 The following is one of
the main results in \cite{KrLi11}.

\begin{theorem}[{\cite[Thm~3.4]{KrLi11}}]
\label{thmkrlmain}  For $r>1$ and $\alpha \in
  (0,1)$, if $\mu$ has a weakly unimodal density $g$ and $\int
  |x|^{r+\delta}\, d \mu(x) < \infty$ for some $\delta > 0$, then
  $Q_{\alpha, r}(\mu)$ is well defined and
\[
\lim_{R \rightarrow \infty }e^{rR} D_{\mu}^{\alpha}(R) = Q_{\alpha, r}(\mu) .
\]
\end{theorem}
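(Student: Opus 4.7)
The plan is to prove the theorem by establishing matching asymptotic upper and lower bounds for $e^{rR}D_\mu^{\alpha}(R)$ as $R\to\infty$. The well-definedness of $Q_{\alpha,r}(\mu)$, i.e., finiteness of $\int g^{\beta_1}\,d\lambda$ with $\beta_1\in(0,1)$, should follow from weak unimodality of $g$ (which forces $g$ to be bounded) combined with the $(r+\delta)$-moment hypothesis, via standard interpolation bounds for $L^p$ norms of densities.

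For the upper bound (achievability) I would construct explicit companding quantizers. The finite $(r+\delta)$-th moment lets me pick a sequence of truncation intervals $I_R=[-a_R,a_R]$ with $a_R\to\infty$ slowly enough that contributions of $\R\setminus I_R$ to both distortion and R\'enyi entropy are asymptotically negligible. On $I_R$ I build $q_n$ whose codecells $S_i$ have lengths $l_i\approx 1/(n\phi(c_i))$ with point density $\phi=g^{\gamma}/\int g^{\gamma}\,d\lambda$ for $\gamma=(1-\alpha)/(r+1-\alpha)$, placing each $c_i$ at the centroid of $S_i$. A direct Riemann-sum computation yields
\[
D_\mu(q_n)\sim \frac{C(r)}{n^{r}}\int \frac{g}{\phi^{r}}\,d\lambda,\qquad e^{(1-\alpha)H_\mu^{\alpha}(q_n)}\sim n^{1-\alpha}\int g^{\alpha}\phi^{1-\alpha}\,d\lambda,
\]
and a short calculus-of-variations check confirms that this choice of $\gamma$ is the optimizer, making $e^{rH^{\alpha}_\mu(q_n)}D_\mu(q_n)\to Q_{\alpha,r}(\mu)$.

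For the lower bound (converse) I would start from an arbitrary $q\in\mathcal{Q}$ with interval codecells $S_i$, lengths $l_i$, probabilities $p_i=\mu(S_i)$, and codepoints $c_i\in S_i$. Two ingredients drive the argument. First, a per-cell distortion estimate $D_i\ge C(r)\,l_i^{r+1}\inf_{S_i}g$, valid for any placement of $c_i$, which under continuity of $g$ and smallness of $l_i$ yields $D_i\ge C(r)\,p_i^{r+1}\,g_i^{-r}(1-o(1))$ with $g_i$ a representative density value on $S_i$. Second, a reverse H\"older inequality with conjugate exponents $t=r/(r+1-\alpha)\in(0,1)$ and $s=-r/(1-\alpha)<0$, applied to the factorization $p_i^{\alpha}\approx(p_i^{r+1}/g_i^{\,r})^{1/s}(l_i g_i^{\beta_1})^{1/t}$ (an identity modulo the approximation $p_i\approx g_i l_i$). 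After rearrangement this gives
\[
D_\mu(q)\ge C(r)\Bigl(\sum_i p_i^{\alpha}\Bigr)^{-r/(1-\alpha)}\Bigl(\sum_i l_i g_i^{\beta_1}\Bigr)^{\beta_2}(1-o(1)).
\]
Using $\sum_i p_i^{\alpha}=e^{(1-\alpha)H_\mu^{\alpha}(q)}$ and passing to the Riemann-sum limit $\sum_i l_i g_i^{\beta_1}\to\int g^{\beta_1}\,d\lambda$ along a sequence whose maximal cell length on compact pieces of $\supp(g)$ tends to zero delivers $e^{rH_\mu^{\alpha}(q)}D_\mu(q)\ge Q_{\alpha,r}(\mu)(1-o(1))$.

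The main obstacle lies in the lower bound, specifically in handling near-optimal quantizers that may have countably many cells covering the whole real line. Such quantizers can possess cells deep in the tails of $g$ and cells of arbitrarily small mass, so the heuristic $p_i\approx g(x_i)l_i$ cannot be invoked uniformly. The moment hypothesis allows a Chebyshev-type argument to show that tail cells contribute negligibly to both $D_\mu(q)$ and $\sum_i p_i^{\alpha}$, while weak unimodality, through continuity of $g$ on $\supp(g)$ and the compact-interval structure of its level sets, supplies the regularity needed to force asymptotically optimal quantizers to have vanishing cell diameters on any fixed compact piece of $\supp(g)$ and thereby legitimates the Riemann-sum convergence. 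Carefully coordinating the truncation rate $a_R$ with $R\to\infty$ and patching the core, tails, and any residual small-mass cells is the delicate technical heart of the proof.
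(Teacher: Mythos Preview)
This theorem is not proved in the present paper. It is quoted verbatim from \cite[Thm~3.4]{KrLi11} and serves as an imported foundational result on which the rest of the paper (Theorem~\ref{theoremdens}, Corollary~\ref{corollary123}, Theorem~\ref{mresul}) is built. Consequently there is no ``paper's own proof'' to compare your proposal against; the full argument lives in \cite{KrLi11}.

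That said, your sketch is broadly consonant with what the paper's remarks reveal about the structure of the proof in \cite{KrLi11}. In particular, Remark~\ref{rem_mismatch}(a) confirms that the upper bound there is obtained via companding quantizers with point density $h=g^{1/\beta_2}/\int g^{1/\beta_2}\,d\lambda$, which is exactly your $\phi$ (since $1/\beta_2=(1-\alpha)/(r+1-\alpha)$). Your identification of the per-cell distortion / reverse H\"older route for the converse is also the natural generalization of the fixed-rate argument in \cite{GrLu00}. Where your proposal remains a sketch rather than a proof is precisely where \cite{KrLi11} describes the problem as ``technically more challenging'': making the heuristic $p_i\approx g(x_i)l_i$ rigorous uniformly over near-optimal quantizers with countably many interval cells. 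Your acknowledgment that weak unimodality is what forces small cell diameters on compacta and that the $(r+\delta)$-moment controls the tails is correct in spirit, but the actual coordination of truncation, residual small-mass cells, and the Riemann-sum passage is a substantial piece of analysis that your outline does not supply. If you intend to reconstruct the proof, that is the part requiring real work; the rest of your plan is sound.
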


\begin{remark} (a) The theorem  and   (\ref{eq_mainasympt}) express the
same asymptotic result since 
\[
\left( \int g^{\beta_{1}}\, d \lambda\right)^{\beta_{2}}= e^{r
  h^{\beta_1}(g)}.
\]
The quantization coefficient $Q_{\alpha,r}$  can also be expressed in terms of R\'enyi
divergences \eqref{renyi-div} and the density 
$g_{\alpha,r}$    introduced in \eqref{g_alpha_r}. One can easily
verify that 
\begin{equation}
\label{eq_auxxxx}  
\left( \int g^{\beta_{1}}\, d \lambda\right)^{\beta_{2}} =
 e^{-r
   \mathcal{D}_{\alpha}(g\|g_{\alpha,r})} \int
 \frac{g}{\bigl(g_{\alpha,r}\bigr)^r} \, d\lambda.
\end{equation}
Furthermore, for any density $h$ with $ \int \frac{g}{h^r} \,
d\lambda<\infty$, 
\begin{eqnarray*}
 \int \frac{g}{h^r} \, d\lambda &=&  \left(\int g^{\frac{1}{1+r}}\,
 d\lambda \right)^{1+r} \int \left(\frac{g^{\frac{1}{1+r}}}{\int g^{\frac{1}{1+r}}\,
 d\lambda}\right)^{1+r}  h^{(1-(1+r))}\, d\lambda    \\
&=& \|g\|_{\frac{1}{1+r}} \int (g_{0,r})^{1+r} h^{(1-(1+r))}\,
 d\lambda \\
&=& \|g\|_{\frac{1}{1+r}}\, e^{r\,  \mathcal{D}_{1+r}(g_{0,r}\|h)}.
\end{eqnarray*}  
Substituting $h=g_{\alpha,r}$ and combining with \eqref{eq_auxxxx} we obtain
\[
\left( \int g^{\beta_{1}}\, d \lambda\right)^{\beta_{2}} =
 \|g\|_{\frac{1}{1+r}} e^{r( \mathcal{D}_{1+r}(g_{0,r}\|g_{\alpha,r}) -
   \mathcal{D}_{\alpha}(g\|g_{\alpha,r}))}.
\]

\noindent (b)  Theorem~3.4 in \cite{KrLi11} also covers the more
exotic  $\alpha\in [-\infty,0)$  case, but for technical reasons we
  require that   $\alpha\in (0,1)$. The weak unimodality condition is a
  technical one and most likely can be significantly relaxed.

\end{remark}

\begin{definition}
\label{defasymp}
A sequence of quantizer $(q_{n})_{n \in \mathbb{N}}$ with $H_{\mu
}^{\alpha }(q_{n}) \to \infty$ as $n \to \infty$ is called
$\alpha$-asymptotically optimal for $\mu$ if
\[
\lim_{n \rightarrow \infty } \frac{D_{\mu }(q_{n})}{ D_{\mu }^{\alpha } ( H_{\mu }^{\alpha }(q_{n}) ) } = 1.
\]
\end{definition}

\begin{remark} In what follows  we will simply write ``asymptotically
  optimal'' instead of ``$\alpha$-asymptotically optimal.''    Under the conditions of Theorem~\ref{thmkrlmain}, a
  quantizer sequence  $(q_n)$ with $H_{\mu 
}^{\alpha }(q_{n}) \to \infty$  is asymptotically
optimal for $\mu$ if and only if
\[
\lim_{n \rightarrow \infty }e^{rH_{\mu}^{\alpha }(q_{n})} D_{\mu }(q_{n}) = Q_{\alpha, r}(\mu) .
\]
\end{remark}

\smallskip

For any measurable $A \subset \mathbb{R}$ with $\mu (A) > 0$ we denote
by $\mu ( \cdot | A )$ the conditional probability for $\mu$ given
$A$.  Let $c,d\in \R$ be such that $c<d$ and $\mu((c,d])\in (0,1)$,
but otherwise arbitrary.  In the following theorem, we let
$A_{1}=(c,d]$, $A_{2}=\mathbb{R} \setminus A_{1}$, and $\mu_i=
\mu(\cdot|A_i)$ for $i\in \{1,2\}$.

\begin{theorem}
\label{theoremdens}
Let $r>1$ and $\alpha \in (0,1)$. Let  $\mu=g\lambda$,  where the
density function $g$  is weakly unimodal and satisfies  $\int
|x|^{r+\delta}\, d \mu(x) < \infty$ for some $\delta > 0$.  Let
$(q_{n})_{n \in \mathbb{N}}$ be an asymptotically optimal sequence for
$\mu$.  Then, 
 for $i \in \{ 1,2 \}$,
\begin{equation}
\label{propentrloc}
\lim_{n\to \infty} \frac{e^{(1 - \alpha ) H_{\mu_i}^{\alpha }(q_{n})}} 
{e^{(1 - \alpha ) H_{\mu }^{\alpha }(q_{n})}} =   \frac{ \int_{A_{i}}  
g^{\beta_{1}} \, d \lambda  }{ \int_{\mathbb{R}}  g^{\beta_{1}}\, d \lambda }\mu ( A_{i} )^{- \alpha }.
\end{equation}
and $(q_n)$ is asymptotically optimal for $\mu_i$, i.e., $\lim_{n\to
  \infty}H_{\mu_i}^{\alpha }(q_{n}) = \infty$ and 
\begin{equation}
\label{cutsequopt}
\lim_{n\to \infty} e^{r H_{\mu_i}^{\alpha }(q_{n})}
D_{\mu_i}(q_{n})  
= Q_{\alpha , r}( \mu_i ).
\end{equation}

\end{theorem}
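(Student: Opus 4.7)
The plan is to transfer the asymptotic performance of $(q_n)$ on $\mu$ to $\mu_1$ and $\mu_2$ by means of two decomposition identities. Writing $\pi_i=\mu(A_i)$, the distortion splits exactly as $D_\mu(q_n)=\pi_1 D_{\mu_1}(q_n)+\pi_2 D_{\mu_2}(q_n)$. For the R\'enyi moment, let $a_j=\mu(S_j\cap A_1)$ and $b_j=\mu(S_j\cap A_2)$ for the codecells $S_j$ of $q_n$. Since both $A_1=(c,d]$ and the $S_j$ are intervals, at most two codecells intersect both $A_1$ and $A_2$, and on the remaining cells one of $a_j,b_j$ vanishes. Combining this with $(a+b)^\alpha\le a^\alpha+b^\alpha$ for $\alpha\in(0,1)$ and the elementary estimate $|(a+b)^\alpha-a^\alpha-b^\alpha|\le\min(a^\alpha,b^\alpha)\le 1$ yields
\[
e^{(1-\alpha)H^\alpha_\mu(q_n)}=\pi_1^\alpha e^{(1-\alpha)H^\alpha_{\mu_1}(q_n)}+\pi_2^\alpha e^{(1-\alpha)H^\alpha_{\mu_2}(q_n)}+\varepsilon_n,\qquad|\varepsilon_n|\le 2.
\]

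Next I show $H^\alpha_{\mu_i}(q_n)\to\infty$ for each $i$. If along some subsequence $H^\alpha_{\mu_i}(q_n)\le\bar R$, then $D_{\mu_i}(q_n)\ge D^\alpha_{\mu_i}(\bar R)>0$ because $\mu_i$ is nondegenerate absolutely continuous; the distortion decomposition gives $D_\mu(q_n)\ge\pi_i D^\alpha_{\mu_i}(\bar R)>0$. This contradicts $D_\mu(q_n)\to 0$, which follows from Theorem~\ref{thmkrlmain} and the asymptotic optimality of $(q_n)$.

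Now I extract an arbitrary subsequence along which $\eta_i:=e^{(1-\alpha)(H^\alpha_{\mu_i}(q_n)-H^\alpha_\mu(q_n))}$ converges to some $\eta_i^*\in[0,\infty]$. Dividing the entropy decomposition by $e^{(1-\alpha)H^\alpha_\mu(q_n)}$ and sending $n\to\infty$ gives $\pi_1^\alpha\eta_1^*+\pi_2^\alpha\eta_2^*=1$, so $\eta_i^*<\infty$. Using the Zador-type lower bound $\liminf_n D_{\mu_i}(q_n)e^{rH^\alpha_{\mu_i}(q_n)}\ge Q_{\alpha,r}(\mu_i)$ (see the obstacle below) together with the identity $D_{\mu_i}(q_n)e^{rH^\alpha_\mu(q_n)}=D_{\mu_i}(q_n)e^{rH^\alpha_{\mu_i}(q_n)}\,\eta_i^{-r/(1-\alpha)}$, the distortion decomposition yields
\[
Q_{\alpha,r}(\mu)=\lim_n D_\mu(q_n)e^{rH^\alpha_\mu(q_n)}\ge\pi_1 Q_{\alpha,r}(\mu_1)(\eta_1^*)^{-r/(1-\alpha)}+\pi_2 Q_{\alpha,r}(\mu_2)(\eta_2^*)^{-r/(1-\alpha)},
\]
which in particular forces $\eta_i^*>0$. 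Substituting $Q_{\alpha,r}(\mu_i)=C(r)\pi_i^{-\beta_1\beta_2}I_i^{\beta_2}$ with $I_i=\int_{A_i}g^{\beta_1}\,d\lambda$ and $I=I_1+I_2$ reduces the right-hand side to a strictly convex function of $(\eta_1,\eta_2)$ on the affine constraint $\pi_1^\alpha\eta_1+\pi_2^\alpha\eta_2=1$; a routine Lagrangian computation, built on the identities $\beta_2=1+r/(1-\alpha)$ and $\beta_1\beta_2=1+\alpha r/(1-\alpha)$, gives the unique minimizer $\eta_i^*=(I_i/I)\pi_i^{-\alpha}=\tau_i$, with minimum value $C(r)I^{\beta_2}=Q_{\alpha,r}(\mu)$. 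Tightness then forces $\eta_i^*=\tau_i$, which is \eqref{propentrloc}, and simultaneously forces each individual Zador bound to be attained, yielding \eqref{cutsequopt}. Since the argument applies to any subsequence, the full sequences converge.

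The main obstacle is securing the Zador-type lower bound for the conditional measures: $g_i=g\mathbf{1}_{A_i}/\pi_i$ typically has jump discontinuities at $c$ and $d$ and so is not weakly unimodal, whence Theorem~\ref{thmkrlmain} does not apply verbatim to $\mu_i$. One sidesteps this by invoking only the lower-bound half of the argument from \cite{KrLi11}, which relies solely on the finiteness of a moment of order $r+\delta$ (automatic for $\mu_1$ since $A_1$ is bounded, and inherited from $\mu$ for $\mu_2$).
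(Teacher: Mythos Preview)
Your proof is correct and follows the same core strategy as the paper's: split both the distortion and the R\'enyi moment $e^{(1-\alpha)H^\alpha}$ across $A_1,A_2$, invoke the Zador-type lower bound for each $\mu_i$, and pin down the subsequential limits via the resulting constrained optimization (your Lagrangian calculation is equivalent to the paper's one-variable Lemma~\ref{lowbouasymp}). A few intermediate steps are handled more economically in your version. The paper controls the boundary cells through Lemma~\ref{lemmaux} (showing each individual cell's relative contribution vanishes), whereas your absolute bound $|\varepsilon_n|\le 2$---valid because at most two interval cells straddle the endpoints $c,d$---achieves the same conclusion more directly once one divides by $e^{(1-\alpha)H^\alpha_\mu(q_n)}\to\infty$. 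The paper also organizes its argument around a separate finiteness hypothesis~\eqref{noninfass} that is established only at the end, while you obtain $\eta_i^*<\infty$ immediately from the affine constraint $\pi_1^\alpha\eta_1^*+\pi_2^\alpha\eta_2^*=1$ and $\eta_i^*>0$ from the Zador bound. Both you and the paper flag the same technical obstacle---Theorem~\ref{thmkrlmain} does not literally apply to $\mu_i$ since $g_i$ need not be weakly unimodal---and both resolutions (yours: invoke only the lower-bound half of the argument in~\cite{KrLi11}; the paper's footnote: extend the proof to mixtures of weakly unimodal pieces) are left somewhat informal.
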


\begin{remark} (a) \ As discussed in Section~\ref{sec_entrdist}, the
  ratio on the left hand side of \eqref{propentrloc} can be
  interpreted as the relative contribution to R\'enyi entropy of
  interval $I$. The theorem determines the limit of this relative
  entropy contribution for a sequence of asymptotically optimal
  quantizers. The method used in the proof is a generalization of the
  approach developed by Bucklew \cite{Buc84} for the case $\alpha =
  0$.

\smallskip

\noindent (b)\ Using $\alpha \in ( 0,1 )$ and the condition $\int
|x|^{r+\delta} \, d \mu(x) < \infty$, the integral in the definition
of $Q_{\alpha , r}( \mu )$ can be shown to be finite by an application
of H\"older's inequality as in \cite[Remark 6.3 (a)]{GrLu00}. For the
same reason, $Q_{\alpha , r}(\mu_i)$ is finite for $i\in \{1,2\}$.

\end{remark}

In the  proof of the theorem we will need the following lemma which
is proved in the Appendix.

\begin{lemma} 
\label{lemmaux}
Under the conditions of \ Theorem~\ref{theoremdens} the
  following hold: 
For $i \in \{ 1,2 \}$,
 \begin{equation}
\label{exqkoeff}
\lim_{n\to \infty}H_{\mu_i}^{\alpha }(q_{n}) = \infty
\end{equation}
and for all $p\in \R$, 
\begin{equation}
\label{entrconvzero}
\lim_{n\to \infty} \frac{\mu ( q_{n}^{-1}(q_{n}(p)) )^{\alpha }}{ \sum_{a \in q_{n}(\mathbb{R})} \mu ( q_{n}^{-1}(a) )^{\alpha }  } 
= 0, \qquad 
\lim_{n\to \infty} \frac{\mu_i ( q_{n}^{-1}(q_{n}(p)))^{\alpha }}{ \sum_{a \in q_{n}(\mathbb{R})}\mu_i (  q_{n}^{-1}(a)  )^{\alpha }  } 
= 0.
\end{equation}

\end{lemma}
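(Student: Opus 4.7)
The plan is to establish \eqref{exqkoeff} by contradiction and then deduce \eqref{entrconvzero} essentially for free. The starting observation is that since $(q_{n})$ is asymptotically optimal for $\mu$ and $H_{\mu}^{\alpha}(q_{n}) \to \infty$, Theorem~\ref{thmkrlmain} forces $D_{\mu}(q_{n}) \to 0$, and therefore also $D_{\mu_{i}}(q_{n}) \le D_{\mu}(q_{n})/\mu(A_{i}) \to 0$ for each $i \in \{1,2\}$.

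To prove \eqref{exqkoeff}, I would assume for contradiction that $H_{\mu_{i}}^{\alpha}(q_{n}) \le M$ along a subsequence, so that $\sum_{j} \mu_{i}(S_{j,n})^{\alpha} \le K := e^{(1-\alpha)M}$, where $(S_{j,n})_{j}$ are the codecells, and derive a lower bound on $D_{\mu_{i}}(q_{n})$ that is bounded away from zero. Weak unimodality of $g$ (hence continuity on its support) together with $\mu(A_{i}) > 0$ allows me to pick a closed interval $J \subset A_{i}$ of positive length on which $g \ge \delta > 0$; for $i = 2$ this uses that $A_{2} \cap \supp(g)$ has positive Lebesgue measure, so $J$ can be placed inside one of the two components of $A_{2}$. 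Writing $\ell_{j,n}^{*} = \lambda(S_{j,n} \cap J)$, the pointwise lower bound on $g$ converts the entropy estimate into $\sum_{j}(\ell_{j,n}^{*})^{\alpha} \le C$ for some constant $C = C(K,\delta,\mu(A_{i}))$, while the partition property of the cells gives $\sum_{j}\ell_{j,n}^{*} = \lambda(J) > 0$.

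The crux is to turn these two constraints into a positive lower bound on $\sum_{j}(\ell_{j,n}^{*})^{r+1}$. I would invoke the log-convexity of $s \mapsto \log \sum_{j}(\ell_{j,n}^{*})^{s}$ (equivalently a H\"older interpolation at $1 = \alpha\theta + (r+1)(1-\theta)$, i.e.\ $\theta = r/(r+1-\alpha) \in (0,1)$), which yields
\[
 \lambda(J) \le \biggl(\sum_{j}(\ell_{j,n}^{*})^{\alpha}\biggr)^{\theta} \biggl(\sum_{j}(\ell_{j,n}^{*})^{r+1}\biggr)^{1-\theta} \le C^{\theta} \biggl(\sum_{j}(\ell_{j,n}^{*})^{r+1}\biggr)^{1-\theta}.
\]
Combined with the routine estimate $\int_{S_{j,n} \cap J}|x-c_{j,n}|^{r}\, d\lambda \ge (\ell_{j,n}^{*})^{r+1}/((r+1)2^{r})$ (minimum of the integral over the codepoint location) and $g_{i} \ge \delta/\mu(A_{i})$ on $J$, this produces $D_{\mu_{i}}(q_{n}) \ge $ (positive constant), contradicting $D_{\mu_{i}}(q_{n}) \to 0$.

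The second assertion \eqref{entrconvzero} is then immediate: the two denominators equal $e^{(1-\alpha) H_{\mu}^{\alpha}(q_{n})}$ and $e^{(1-\alpha) H_{\mu_{i}}^{\alpha}(q_{n})}$, both of which diverge by the hypothesis of asymptotic optimality and by \eqref{exqkoeff}, whereas the numerators are bounded by $1$ because $\alpha \in (0,1)$ and the cell probabilities lie in $[0,1]$. The step I expect to require most care is choosing $J$ cleanly inside the disconnected set $A_{2}$; after that, the H\"older-type interpolation on $\sum_{j}(\ell_{j,n}^{*})^{r+1}$ is the single nontrivial analytic input, and everything else is essentially routine.
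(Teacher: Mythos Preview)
Your proof is correct, but your route to \eqref{exqkoeff} differs from the paper's. The paper argues that $D_{\mu}(q_{n})\to 0$ together with the nonatomicity of $\mu$ forces $\max_{a}\mu(q_{n}^{-1}(a))\to 0$ (this is isolated as a separate lemma, Lemma~\ref{lemmaxprob}), hence also $\max_{a}\mu_{i}(q_{n}^{-1}(a))\to 0$; then the elementary bound $\hat H^{\alpha}(p)\ge -\log p_{\max}$ (valid for $\alpha\in(0,1)$) immediately gives $H_{\mu_{i}}^{\alpha}(q_{n})\to\infty$. Your argument instead fixes an interval $J\subset A_{i}$ where $g$ is bounded below, converts the bounded-entropy assumption into a bound on $\sum_{j}(\ell_{j,n}^{*})^{\alpha}$, and uses a H\"older/log-convexity interpolation between the exponents $\alpha$, $1$, and $r+1$ to produce a positive lower bound on $\sum_{j}(\ell_{j,n}^{*})^{r+1}$ and hence on $D_{\mu_{i}}(q_{n})$. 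This is more hands-on but entirely self-contained; the paper's approach is shorter and conceptually cleaner (it never needs to look at cell lengths or invoke H\"older), at the price of the auxiliary max-cell-probability lemma. For \eqref{entrconvzero} the two arguments are essentially identical: the paper phrases it as a contradiction, but the content is exactly your observation that the numerators are bounded by $1$ while the denominators are $e^{(1-\alpha)H_{\mu}^{\alpha}(q_{n})}$ and $e^{(1-\alpha)H_{\mu_{i}}^{\alpha}(q_{n})}$, both diverging.
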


\noindent\emph{Proof of Theorem~\ref{theoremdens}.} \ 
We begin the
proof by showing that (\ref{propentrloc}) holds if we 
additionally assume that for $i\in \{1,2\}$, 
\begin{equation}
\label{noninfass}
\limsup_{n \to \infty} e^{r ( H_{\mu}^{\alpha }(q_{n}) - H_{\mu_i}^{\alpha }(q_{n})  ) } < \infty .
\end{equation}
In this case, any subsequence of $(q_n)$ has a sub-subsequence, which we also
denote by $(q_n)$, such that 
\begin{equation}
\label{fghztr}
\lim_{n\to \infty} e^{r ( H_{\mu}^{\alpha }(q_{n}) - H_{\mu_1}^{\alpha
  }(q_{n})  ) } = d^{\frac{r}{1 - \alpha }}
\end{equation}
for some $d\in [0,\infty)$. The obvious bound
\begin{equation}
\label{dlowerbound}
e^{r ( H_{\mu}^{\alpha }(q_{n}) - 
H_{\mu_1}^{\alpha }(q_{n}) ) }  \ge \mu ( A_{1} )^{\frac{\alpha r}{1
      - \alpha}}
\end{equation}
implies that $d>0$. In what follows we show that $d$ is independent of
the choice of the sub-subsequence (and thus the limit
in (\ref{fghztr}) holds for the original sequence) and
explicitly identify $d$. 

For any two sequences $(u_{n})$ and $(v_{n})$ of positive reals we
write $u_{n} \sim v_{n}$ if 
\begin{equation}
\label{simdef}
\lim_{n \to \infty} \frac{u_{n}}{v_{n}} = 1.
\end{equation}
Note that if $u_n\sim v_n$ and $u'_n\sim v'_n$, then $(u_n+
u'_n)\sim (v_n+v'_n)$ and $u_n \cdot u'_n\sim v_n\cdot v'_n$.
We can rewrite  (\ref{fghztr})  as 
\begin{equation}
\label{simentro}
e^{(1- \alpha ) H_{\mu_1}^{\alpha }(q_{n}) } 
\sim \frac{1}{d} e^{(1-\alpha ) H_{\mu }^{\alpha }(q_{n}) } .
\end{equation}
We note that
\begin{eqnarray}
e^{r H_{\mu}^{\alpha }(q_{n})} D_{\mu }(q_{n}) 
&=&
e^{r H_{\mu}^{\alpha }(q_{n})} \sum_{i=1}^{2} \mu ( A_{i} ) D_{\mu_i}(q_{n}) \nonumber \\
&=& 
\sum_{i=1}^{2} e^{r ( H_{\mu}^{\alpha }(q_{n}) - H_{\mu_i}^{\alpha }(q_{n})  ) }  \mu ( A_{i} )
e^{r H_{\mu_i}^{\alpha }(q_{n})} D_{\mu_i}(q_{n}).
\label{parterrtwo}
\end{eqnarray}
Since the cells of $q_n$ are intervals, at most  two of them   may 
intersect both $A_1=(c,d]$ and $A_2=\R\setminus (c,d]$ (namely,
those containing $c$ and $d$). Then (\ref{entrconvzero})  implies 
\begin{eqnarray*}
e^{r ( H_{\mu}^{\alpha }(q_{n}) - H_{\mu_2}^{\alpha }(q_{n})  ) }
&=&
e^{r  H_{\mu}^{\alpha }(q_{n})} \left( e^{(1 - \alpha ) 
H_{\mu_2}^{\alpha }(q_{n})   } \right)^{-\frac{r}{1 - \alpha }} \\*
&\sim & 
e^{r  H_{\mu}^{\alpha }(q_{n})} \mu ( A_{2} )^{ \frac{ \alpha r }{1 - \alpha } }
\left( \sum_{a\in q_n(\R):\, q_{n}^{-1}(a) \subset A_{2} } 
\mu ( q_{n}^{-1}(a) )^{\alpha }   \right)^{-\frac{r}{1 - \alpha }} \\
&\sim & 
e^{r  H_{\mu}^{\alpha }(q_{n})} \mu ( A_{2} )^{ \frac{ \alpha r }{1 - \alpha } }
\left( e^{(1 - \alpha )H_{\mu }^{\alpha }(q_{n})}  -  
\sum_{a\in q_n(\R):\,  q_{n}^{-1}(a) \subset A_{1} } 
\mu ( q_{n}^{-1}(a) )^{\alpha }   \right)^{-\frac{r}{1 - \alpha }} \\
&\sim & 
e^{r  H_{\mu}^{\alpha }(q_{n})} \mu ( A_{2} )^{ \frac{ \alpha r }{1 - \alpha } }
\left( e^{(1 - \alpha )H_{\mu }^{\alpha }(q_{n})}  -  
e^{(1 - \alpha )H_{\mu_1}^{\alpha }(q_{n})} \mu ( A_{1} )^{\alpha }
\right)^{-\frac{r}{1 - \alpha }} .
\end{eqnarray*}
In view of  (\ref{simentro}) we conclude
\begin{eqnarray}
e^{r ( H_{\mu}^{\alpha }(q_{n}) - H_{\mu_2}^{\alpha }(q_{n})  ) } 
& \sim & 
e^{r  H_{\mu}^{\alpha }(q_{n})} \mu ( A_{2} )^{ \frac{ \alpha r }{1 - \alpha } }
\left( e^{(1 - \alpha )H_{\mu }^{\alpha }(q_{n})}  -  
\frac{1}{d} e^{(1-\alpha ) H_{\mu }^{\alpha }(q_{n}) } \mu ( A_{1} )^{\alpha }
\right)^{-\frac{r}{1 - \alpha }} \nonumber \\
& = &
\mu ( A_{2} )^{ \frac{ \alpha r }{1 - \alpha } }
\left( 1  -  
\frac{1}{d}  \mu ( A_{1} )^{\alpha }
\right)^{-\frac{r}{1 - \alpha }}.
\label{erhmu}
\end{eqnarray}
Applying (\ref{erhmu}) and (\ref{fghztr}) to (\ref{parterrtwo}) we obtain
\begin{eqnarray}
Q_{\alpha , r} (\mu ) \nonumber 
&\sim & e^{r H_{\mu}^{\alpha }(q_{n})} D_{\mu }(q_{n}) \nonumber \\
& = &
e^{r ( H_{\mu}^{\alpha }(q_{n}) - H_{\mu_1}^{\alpha }(q_{n})  ) }  \mu ( A_{1} )
e^{r H_{\mu_1}^{\alpha }(q_{n})} D_{\mu_1}(q_{n}) \nonumber \\
&&  \mbox{}  + 
e^{r ( H_{\mu}^{\alpha }(q_{n}) - H_{\mu_2}^{\alpha }(q_{n})  ) }  \mu ( A_{2} )
e^{r H_{\mu_2}^{\alpha }(q_{n})} D_{\mu_2}(q_{n}) \nonumber \\
& \sim &
d^{\frac{r}{1 - \alpha }}  \mu ( A_{1} )
e^{r H_{\mu_1}^{\alpha }(q_{n})} D_{\mu_1}(q_{n}) \nonumber \\
&& \mbox{}   + 
\mu ( A_{2} )^{ \frac{ \alpha r }{1 - \alpha } }
\left( 1  -  
\frac{1}{d}  \mu ( A_{1} )^{\alpha }
\right)^{-\frac{r}{1 - \alpha }}
\mu ( A_{2} )
e^{r H_{\mu_2}^{\alpha }(q_{n})} D_{\mu_2}(q_{n}) \nonumber \\
&=&
\mu ( A_{1} )^{\beta_{1} \beta_{2}}
\left(  \mu ( A_{1} )^{- \alpha } d\right)^{\frac{r}{1 - \alpha }}  
e^{r H_{\mu_1}^{\alpha }(q_{n})} D_{\mu_1}(q_{n}) \nonumber \\
&&\mbox{}  + 
\mu ( A_{2} )^{\beta_{1} \beta_{2}}
\left( \frac{1}{1 - \frac{1}{  \mu ( A_{1} )^{- \alpha }d  }} \right)^{\frac{r}{1 - \alpha }}
e^{r H_{\mu_2}^{\alpha }(q_{n})} D_{\mu_2}(q_{n}). 
\label{qkopartpart}
\end{eqnarray}
Since $H_{\mu_i}^{\alpha }(q_{n}) \to \infty$ by  (\ref{exqkoeff}),
Theorem~\ref{thmkrlmain}  implies\footnote{Strictly speaking,
  Theorem~\ref{thmkrlmain}  (\cite[Thm~3.4]{KrLi11}) does not apply
  for $\mu_2$ since its 
  density $g_2$ is not weakly unimodal. However, $g_2$ is the mixture
  of two weakly unimodal densities with well-separated supports, and the
  proof of \cite[Thm~3.4]{KrLi11} can easily be extended to this case.}
\begin{equation}
\label{dilower}
\liminf_{n\to \infty} e^{r H_{\mu_i}^{\alpha }(q_{n})}
D_{\mu_i}(q_{n}) \ge Q_{\alpha , r}(\mu_i), \quad i\in \{1,2\}
\end{equation}
and thus the limit inferior of the  the right hand  
side of (\ref{qkopartpart}) is lower bounded by 
\begin{eqnarray}
\lefteqn{ \left(  \mu ( A_{1} )^{- \alpha }d \right)^{\frac{r}{1 - \alpha }}  
Q_{\alpha , r}(\mu_1) \mu ( A_{1} )^{\beta_{1} \beta_{2}} 
+ 
\left( \frac{1}{1 - \frac{1}{  \mu ( A_{1} )^{- \alpha } d }} \right)^{\frac{r}{1 - \alpha }}
Q_{\alpha , r}(\mu_2) \mu ( A_{2} )^{\beta_{1} \beta_{2}} }  \nonumber \\ 
& =& C(r)
\left( \mu ( A_{1} )^{- \alpha } d\right)^{\frac{r}{1 - \alpha }}
\left( \int_{A_{1}} g^{\beta_{1}}\, d \lambda \right)^{\beta_{2}} 
+  C(r)
\left( \frac{1}{1 - \frac{1}{  \mu ( A_{1} )^{- \alpha } d }} \right)^{\frac{r}{1 - \alpha }}
\left( \int_{A_{2}} g^{\beta_{1}}\, d \lambda \right)^{\beta_{2}}.
\label{lowbqko}
\end{eqnarray}
In view of the definition of $Q_{\alpha,r}(\mu)$, combining
(\ref{qkopartpart}) and (\ref{lowbqko})  yields
\begin{equation}
\label{lowbo}
\left( \int g^{\beta_{1}} \, d \lambda \right)^{\beta_{2}}
\geq F(d)
\end{equation}
where 
\[
F(d)= \left( \mu ( A_{1} )^{- \alpha }d \right)^{\frac{r}{1 - \alpha }}
\left( \int_{A_{1}} g^{\beta_{1}}\, d \lambda \right)^{\beta_{2}} 
+  
\left( \frac{1}{1 - \frac{1}{  \mu ( A_{1} )^{- \alpha }d  }} \right)^{\frac{r}{1 - \alpha }}
\left( \int_{A_{2}} g^{\beta_{1}}\, d \lambda \right)^{\beta_{2}}.
\]
Now let 
\begin{equation}
\label{infford}
d_{0} = \mu ( A_{1} )^{\alpha } 
\frac{ \int g^{\beta_{1}}\, d \lambda  }{ \int_{A_{1}} g^{\beta_{1}}\, d \lambda  } 
\end{equation}
and note that the bound  (\ref{dlowerbound}) implies  $d^{-1} \mu (
A_{1} )^{ \alpha } \in (0,1]$. Moreover, from (\ref{noninfass}) we
  actually obtain   $d^{-1} \mu (
A_{1} )^{ \alpha } \in (0,1)$.
Thus if $d\neq d_0$, then  
Lemma \ref{lowbouasymp} in the Appendix gives  $F(d) > F(d_{0})$. Moreover, a simple calculation yields
$F(d_{0})=\left( \int g^{\beta_{1}}\, d \lambda \right)^{\beta_{2}}$. Hence we deduce from (\ref{lowbo}) that
$d=d_{0}$. Because we  chose an arbitrary convergent subsequence 
in (\ref{fghztr}), we obtain that  (\ref{fghztr}) actually holds with
$d=d_0$ for the 
original quantizer sequence. This and  (\ref{infford}) 
yield (\ref{propentrloc}) for $i=1$. Also, (\ref{erhmu}) and
(\ref{infford}) imply  (\ref{propentrloc}) for $i=2$. 

\smallskip 

As next step we will prove that (\ref{cutsequopt}) is true 
under the assumption (\ref{noninfass}).
We proceed indirectly. Assume first that 
(\ref{cutsequopt}) is not true for $i=1$. Then by (\ref{dilower}) we
can choose a 
subsequence of $(q_{n})$, also denoted by $(q_{n})$, 
such that
\[
\lim_{n\to \infty} e^{ r H_{\mu_1}^{\alpha }(q_{n})  }
D_{ \mu_1}(q_{n})  >
Q_{\alpha , r}(\mu_1)= C(r) \left( \int_{A_{1}} 
\left( \frac{g}{\mu ( A_{1} )} \right)^{\beta_{1} }\, d \lambda  \right)^{\beta_{2}} .
\]
We deduce from (\ref{parterrtwo}) and  (\ref{propentrloc}) that
\begin{eqnarray}
\limsup_{n \to \infty } e^{ r H_{\mu_2}^{\alpha }(q_{n})  }
D_{ \mu_2}(q_{n}) 
&<&
C(r) \left( \int_{A_{2}} 
\left( \frac{g}{\mu ( A_{2} )} \right)^{\beta_{1} }\, d \lambda  \right)^{\beta_{2}}
\label{striequerr}
\end{eqnarray}
since otherwise we would have
\begin{eqnarray*}
\limsup_{n\to \infty}  e^{ r H_{\mu}^{\alpha }(q_{n})  }
D_{ \mu}(q_{n}) & > & C(r) \sum_{i=1}^2  \left( \frac{ \int   
g^{\beta_{1}} \, d \lambda  }{ \int_{A_i}  g^{\beta_{1}}\, d
  \lambda }\right)^{\beta_2-1}  \left( \int_{A_{i}} 
g^{\beta_{1}}\, d \lambda
\right)^{\beta_{2}}  \\
&=& Q_{\alpha,r}(\mu)
\end{eqnarray*}
which would contradict the asymptotic optimality of $(q_n)$.
But the right hand side of (\ref{striequerr}) is
$Q_{\alpha,r}(\mu_2)$, which contradicts (\ref{dilower}), so
(\ref{cutsequopt}) must hold for $i=1$. Similarly, we end in a contradiction if we assume
that (\ref{cutsequopt}) does not hold for $i=2$. 

\smallskip 

It remains to prove that (\ref{noninfass}) must
hold. Assuming the contrary, we have 
\[
\liminf_{n\to \infty} e^{r ( H_{\mu_i}^{\alpha }(q_{n}) - H_{\mu
  }^{\alpha }(q_{n}) ) }  =0.
\]
Since $D_{\mu}(q_n)\ge \mu(A_i)D_{\mu_i}(q_n)$, 
\begin{eqnarray*}
0 &=& \liminf_{n \to \infty} e^{r ( H_{\mu_i}^{\alpha }(q_{n}) - H_{\mu }^{\alpha }(q_{n}) ) } \\
&=&
\liminf_{n \to \infty} 
\frac{ e^{r H_{\mu_i}^{\alpha }(q_{n})} 
\frac{D_{\mu }(q_{n})}{D_{\mu_i}(q_{n})} 
D_{\mu_i}(q_{n})}
{ e^{r H_{\mu }^{\alpha }(q_{n}) )} D_{\mu }(q_{n}) } \\
& \geq &
\frac{\mu ( A_{i} )}{Q_{\alpha , r}(\mu )} \,
\liminf_{n \to \infty} e^{r H_{\mu_i}^{\alpha }(q_{n})} D_{\mu_i}(q_{n}) ,
\end{eqnarray*}
which would imply
\[
\liminf_{n\to \infty} e^{r H_{\mu_i}^{\alpha }(q_{n})} D_{\mu_i}(q_{n})= 0
\]
contradicting (\ref{dilower}).
Hence  (\ref{noninfass}) must hold and the proof is complete. \qed


\medskip

Let $(q_{n})_{n \in \mathbb{N}} $ be a sequence of quantizers and for
any $n \ge 1$ and any Borel set $E \subset \mathbb{R}$  define
\begin{equation}
\label{defmgn}
M_{g}^{n}(E) = e^{r H_{\mu }^{\alpha }(q_{n})} \int_{E}
|x-q_{n}(x)|^{r} g(x) \, d \lambda(x) .
\end{equation}
Moreover, for  $\alpha \in [0, 1+r) \setminus \{ 1 \}$ let
\begin{equation}
\label{mgedef}
M_{g}(E) = C(r) \left( \int_{E} g^{\beta_{1}}\, d \lambda \right) \left( \int_{\mathbb{R}} 
g^{\beta_{1}} d \lambda \right)^{\frac{r}{1 - \alpha }} .
\end{equation}
Clearly, $M_{g}^{n}$ and $M_{g}$ are Borel-measures on
$\mathbb{R}$ that are absolutely continuous with respect to  $\lambda$. 
We define the probability measure $\hat{\mu}$ by setting, for any
Borel set $E\subset \R$, 
\begin{equation}
\label{defmuhat}
\hat{\mu}(E) = \frac{\int_Eg^{\beta_{1}}\, d\lambda }{\int_{\R}
  g^{\beta_{1}}\, d \lambda }.
\end{equation}

\begin{corollary}
\label{corollary123}
Let $r>1$ and $\alpha \in (0, 1)$. 
Suppose  that  $\mu=g\lambda$,  where the
density function $g$  is weakly unimodal and satisfies  $\int
|x|^{r+\delta}\, d \mu(x) < \infty$ for some $\delta > 0$.
If $(q_{n})_{n \in \mathbb{N}}$ is 
an asymptotically optimal sequence of quantizers for $\mu$, then for
any $c,d \in \mathbb{R}$ such that $-\infty < c < d < \infty$ we have 
\begin{itemize}
\item[(i)]
$\displaystyle \lim\limits_{n\to\infty} \frac{\int_{(c,d]}|x-q_{n}(x)|^{r}g(x)\, d \lambda(x)}{
  \int_{\mathbb{R}}|x-q_{n}(x)|^{r}g(x)\, d \lambda(x) }= \hat{\mu}((c,d])$;
\item[(ii)] $M_{g}^{n}$ converges weakly to $M_{g}$.
\end{itemize}
\end{corollary}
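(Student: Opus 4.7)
The plan is to derive (i) directly from Theorem~\ref{theoremdens} (both statements (\ref{propentrloc}) and (\ref{cutsequopt})) by a straightforward algebraic computation, and then deduce (ii) from (i) together with asymptotic optimality.

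For part (i), I set $A_1=(c,d]$ and write the ratio as $\mu(A_1)D_{\mu_1}(q_n)/D_\mu(q_n)$. I would multiply the numerator and denominator by $e^{rH^{\alpha}_\mu(q_n)}$. The denominator converges to $Q_{\alpha,r}(\mu)=C(r)(\int g^{\beta_1}\,d\lambda)^{\beta_2}$ by asymptotic optimality. For the numerator, I factor
\[
 \mu(A_1)\, e^{rH^\alpha_\mu(q_n)}D_{\mu_1}(q_n) \;=\;\mu(A_1)\cdot e^{r(H^\alpha_\mu(q_n)-H^\alpha_{\mu_1}(q_n))}\cdot e^{rH^\alpha_{\mu_1}(q_n)}D_{\mu_1}(q_n).
\]
By (\ref{cutsequopt}) the last factor tends to $Q_{\alpha,r}(\mu_1)=C(r)\mu(A_1)^{-\beta_1\beta_2}(\int_{A_1}g^{\beta_1}\,d\lambda)^{\beta_2}$, while by (\ref{propentrloc}) raised to the power $r/(1-\alpha)$, the middle factor tends to $\hat\mu(A_1)^{-r/(1-\alpha)}\mu(A_1)^{\alpha r/(1-\alpha)}$.

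The main computational step is the simplification of the resulting product, using the identities $\beta_1\beta_2=1+\alpha r/(1-\alpha)$ and $\beta_2-r/(1-\alpha)=1$. The $\mu(A_1)$ powers cancel exactly ($1+\alpha r/(1-\alpha)-\beta_1\beta_2=0$), and the exponent on $\int_{A_1}g^{\beta_1}\,d\lambda$ simplifies to $1$. I therefore obtain that the numerator tends to
\[
 C(r)\Bigl(\int g^{\beta_1}\,d\lambda\Bigr)^{r/(1-\alpha)}\int_{A_1}g^{\beta_1}\,d\lambda \;=\; M_g(A_1).
\]
Dividing by $Q_{\alpha,r}(\mu)$ and using $\beta_2-r/(1-\alpha)=1$ once more gives $\hat\mu(A_1)$, as claimed. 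The hard part here is just keeping track of the exponents; the two exponent identities make everything collapse.

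For part (ii), I first observe that $M_g^n(\mathbb{R})=e^{rH^\alpha_\mu(q_n)}D_\mu(q_n)\to Q_{\alpha,r}(\mu)$ and a short computation (the same one used above with $A_1=\mathbb{R}$) shows $M_g(\mathbb{R})=Q_{\alpha,r}(\mu)$, so the total masses converge. Combined with part (i), this yields $M_g^n((c,d])\to M_g((c,d])$ for every bounded interval with $-\infty<c<d<\infty$. Normalizing $\tilde M_g^n:=M_g^n/M_g^n(\mathbb{R})$ to probability measures, I get $\tilde M_g^n((c,d])\to \hat\mu((c,d])$ for all such intervals, and since $\hat\mu$ is absolutely continuous (so its distribution function is continuous everywhere), convergence of the distribution functions at every real point follows by letting $c\downarrow -\infty$ and comparing with the total mass. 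This is precisely weak convergence of $\tilde M_g^n$ to $\hat\mu$. Multiplying by the convergent total masses gives weak convergence of $M_g^n$ to $M_g(\mathbb{R})\hat\mu=M_g$.

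The main obstacle is really only in part (i), and it is purely bookkeeping: verifying that the three converging factors combine to the claimed $M_g(A_1)$. Part (ii) is then essentially a corollary via normalization plus the classical fact that convergence on a $\pi$-system of continuity-point intervals, together with convergence of total mass, implies weak convergence.
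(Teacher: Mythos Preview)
Your proof is correct and follows essentially the same route as the paper: for (i) you multiply numerator and denominator by $e^{rH^\alpha_\mu(q_n)}$, apply asymptotic optimality to the denominator and both conclusions (\ref{propentrloc}) and (\ref{cutsequopt}) of Theorem~\ref{theoremdens} to the numerator, and simplify via the identities $\beta_1\beta_2=1+\alpha r/(1-\alpha)$ and $\beta_2-r/(1-\alpha)=1$, exactly as the paper does (the paper also notes the harmless reduction to $\mu((c,d])\in(0,1)$, which you should add). For (ii) the paper invokes a refined Portmanteau theorem \cite[Thm.~2.4]{Bil99} to pass directly from convergence on half-open intervals together with convergence of total mass to weak convergence, whereas you normalize to probability measures and argue via distribution functions; these are equivalent and both valid.
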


\begin{remark}
\label{rem_coinc}
Combining Theorem~\ref{theoremdens}  and the corollary and using
the $\sim$ notation introduced in (\ref{simdef}), we observe that 
\begin{equation}
\label{intcd}
 \frac{\int_{(c,d]}| x - q_{n}(x) |^{r}  d \mu (x) }
{ \int_{\mathbb{R}}  | x - q_{n}(x) |^{r} d \mu (x) }
\sim 
\frac{\sum_{ a \in q_{n}( \mathbb{R} ) } \mu ( q_{n}^{-1}(a) \cap (c,d] )^{\alpha } }
{\sum_{ a \in q_{n}( \mathbb{R} ) } \mu ( q_{n}^{-1}(a) )^{\alpha } }.
\end{equation}
This means that the relative error and entropy contributions of $(q_n)$
over any given interval asymptotically   coincide.
\end{remark}

\noindent \emph{Proof of Corollary~\ref{corollary123}.}
We start by proving (i). 
Let $A=(c,d]$ and define 
\[
\mu_{n}(A) = \frac{\int_{A}|x-q_{n}(x)|^{r}g(x)\, d \lambda(x)}{
  \int_{\mathbb{R}}|x-q_{n}(x)|^{r}g(x)\, d \lambda(x) }.
\]
Obviously we can assume without loss of generality that $\mu (A) \in (0,1)$.
Applying  (\ref{propentrloc}) and  (\ref{cutsequopt}) in Theorem~\ref{theoremdens},   we obtain 
\begin{eqnarray}
\mu_{n} (A) &=& 
\frac{ e^{r H_{\mu }^{\alpha }(q_{n}) } 
\int_{A} |x - q_{n}(x)|^{r} g(x)\,  d \lambda(x) }
{ e^{r H_{\mu }^{\alpha }(q_{n}) } 
\int |x - q_{n}(x)|^{r} g(x)\,  d \lambda(x) } \nonumber \\
& \sim & 
\frac{ e^{r ( H_{\mu }^{\alpha }(q_{n}) - 
H_{\mu ( \cdot | A ) }^{\alpha }(q_{n}) ) } 
 \mu (A) \,e^{r H_{\mu ( \cdot | A ) }^{\alpha }(q_{n}) }
  \int_{A} |x - q_{n}(x)|^{r} \frac{ g(x) }{\mu (A)}\,  d \lambda(x)} 
{ Q_{\alpha , r }( \mu ) } \nonumber \\
& \sim & 
\frac{ e^{r ( H_{\mu }^{\alpha }(q_{n}) - H_{\mu ( \cdot | A ) }^{\alpha }(q_{n}) ) } 
Q_{ \alpha, r }( \mu ( \cdot | A ) ) \mu (A) } 
{ Q_{\alpha , r }( \mu ) } \nonumber \\
& \sim &
\left( \frac{\mu ( A )^{\alpha} \int g^{\beta_{1}  } 
\, d \lambda}{ \int_{A} g^{\beta_{1}  }\, d \lambda} \right)^{\frac{r}{1 - \alpha }}
\frac{ 
Q_{ \alpha, r }( \mu ( \cdot | A ) ) \mu (A) } 
{  Q_{\alpha , r }( \mu )}.
\label{rhsqkoeff}
\end{eqnarray}
Definition~\ref{defqcoeff},  \eqref{defmuhat},  and  a straightforward
calculation yield 
that the right hand side of (\ref{rhsqkoeff}) is equal to $\hat{ \mu }(A)$.

Next we prove (ii). Because $(q_{n})$ is asymptotically optimal for
$\mu$ we have $M_{g}^{n}(\mathbb{R}) \to M_{g}(\mathbb{R})$ as $n \to
\infty$. Moreover, $M_{g}$ is a finite measure.  Due to a refined
version of the Portmanteau theorem \cite[Thm.\ 2.4 and Example
  2.3]{Bil99} it suffices to prove that $M_{g}^{n}((c,d]) \to
  M_{g}((c,d])$ for any $-\infty < c< d<\infty$. Let $A=(c,d]$ and
      assume $\mu(A)>0$, since otherwise
      $M_{g}^{n}(A)=M_{g}(A)=0$ for all $n$. Applying the definitions
      (\ref{defmgn}) and (\ref{mgedef}), we obtain
\[
\frac{M_{g}^{n}(A)}{M_{g}(A)} =
\frac{\int_{A} |x-q_{n}(x)|^{r} g(x)\,  d \lambda(x)}{D_{\mu
  }(q_{n})}\cdot 
\frac{e^{r H_{\mu }^{\alpha }(q_{n})} D_{\mu }(q_{n})}{C(r) \left( \int g^{\beta_{1} } 
\, d \lambda \right)^{\frac{r}{1 - \alpha }}  \int_{A}
g^{\beta_{1}}\,  d \lambda}.
\]
Since $(q_{n})$ is asymptotically optimal for $\mu$ and by  (i) we deduce
\[
\lim_{n\to \infty} \frac{M_{g}^{n}(A)}{M_{g}(A)} =
\frac
{C(r) \left( \int g^{\beta_{1}} \, d \lambda \right)^{\beta_{2}}  
\frac{\int_{A} g^{\beta_{1}}\,  d \lambda  }{\int g^{\beta_{1}}\,  d \lambda}  } 
{C(r) \left( \int g^{\beta_{1}} d \lambda \right)^{\frac{r}{1-\alpha }} 
\int_{A} g^{\beta_{1}} \, d \lambda   } = 1
\]
which proves (ii). \qed

\section{Asymptotic mismatch}

\label{sec_mismatch}

In this section we investigate the performance of a sequence of
quantizers $(q_n)$ that  is asymptotically optimal for
the source distribution $\mu$ having density $g$, but  is applied to
the  source distribution $\nu$ having density $f$. 

\begin{theorem}
\label{mresul}
Let $r>1$ and $\alpha \in (0, 1)$. 
Suppose $\mu=g\lambda$, $\nu =f\lambda$, where $g$ and $f$ are weakly
unimodal densities such that $f/g$ is bounded. Assume  $\int
|x|^{r+\delta}\,  d \mu(x) < \infty$ for some $\delta > 
0$.
If $(q_{n})_{n \in \mathbb{N}}$ is an asymptotically optimal sequence of
quantizers for $\mu$, then 
\begin{equation}
\label{relth01}
\lim_{n\to \infty} e^{(1-\alpha )( H_{\nu }^{\alpha }(q_{n}) - H_{\mu }^{\alpha }(q_{n}) )}=
\frac{\int (f/g)^{\alpha } g^{\beta_{1}}\,  d \lambda }{ \int
  g^{\beta_{1}}\,  d \lambda } 
\end{equation}
and
\begin{eqnarray}
\lefteqn{\lim_{n\to \infty}  e^{r H_{\nu }^{\alpha }(q_{n})} D_{\nu}(q_n)} \nonumber \quad \quad \\* 
&& = 
C(r) \left( \int f^{\alpha } ( g^{\frac{1}{\beta_{2}}} )^{1- \alpha
}\,  d \lambda \right)^{\frac{r}{1- \alpha }} 
\int f ( g^{\frac{1}{\beta_{2}}} )^{- r }\,  d \lambda.
\label{relth02}
\end{eqnarray}  
\end{theorem}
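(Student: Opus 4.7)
The plan is to reduce (\ref{relth01}) and (\ref{relth02}) to the entropy and distortion density results (Theorem~\ref{theoremdens} and Corollary~\ref{corollary123}) by treating the density ratio $\phi := f/g$ as a slowly varying weight. Writing $\{S_i\}$ for the codecells of $q_n$, we have
\[
e^{(1-\alpha)(H_{\nu}^{\alpha}(q_n) - H_{\mu}^{\alpha}(q_n))} = \frac{\sum_i \nu(S_i)^{\alpha}}{\sum_i \mu(S_i)^{\alpha}}, \qquad D_{\nu}(q_n) = \int \phi(x) \, |x-q_n(x)|^r \, d\mu(x),
\]
so each limit reduces to an asymptotic ``$\phi$-integral'' against either the entropy- or distortion-weighted data generated by $(q_n)$.

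To prove (\ref{relth01}), I would fix $\epsilon>0$ and $N>0$, and partition the bounded interval $A_N=(-N,N]$ into sub-intervals $I_1,\ldots,I_K$ on which $\phi$ oscillates by less than $\epsilon$, using uniform continuity of $\phi$ on compact subintervals of the interior of $\supp(g)$. Choosing representatives $\phi_k\in[\inf_{I_k}\phi,\sup_{I_k}\phi]$, the inequality $(\phi_k-\epsilon)\mu(S_i)\le\nu(S_i)\le(\phi_k+\epsilon)\mu(S_i)$ holds for every cell $S_i\subset I_k$, and monotonicity of $t\mapsto t^{\alpha}$ passes this to $\nu(S_i)^{\alpha}$. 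Applying Theorem~\ref{theoremdens} on each $I_k$ (with Lemma~\ref{lemmaux} used to discard the at-most-two boundary cells straddling the endpoints of $I_k$) yields
\[
\frac{\sum_{S_i\subset I_k}\mu(S_i)^{\alpha}}{\sum_i\mu(S_i)^{\alpha}} \longrightarrow \frac{\int_{I_k} g^{\beta_1}\,d\lambda}{\int g^{\beta_1}\,d\lambda}.
\]
The tail is dominated via $\nu(S_i)^{\alpha}\le M^{\alpha}\mu(S_i)^{\alpha}$ with $M=\|\phi\|_\infty$, together with Theorem~\ref{theoremdens} applied with $A_1=A_N^c$, whose contribution vanishes as $N\to\infty$. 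Sending $n\to\infty$ first, then $\epsilon\to0$ (equivalently $K\to\infty$), and finally $N\to\infty$ yields (\ref{relth01}) using $\phi^\alpha g^{\beta_1}=f^\alpha g^{\beta_1-\alpha}$.

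For (\ref{relth02}) the same partition-and-tail scheme applied to $D_{\nu}(q_n)=\int\phi|x-q_n|^r\,d\mu$, but now invoking Corollary~\ref{corollary123}(i) on each $I_k$, produces
\[
\frac{D_{\nu}(q_n)}{D_{\mu}(q_n)} \longrightarrow \frac{\int\phi g^{\beta_1}\,d\lambda}{\int g^{\beta_1}\,d\lambda} = \frac{\int f g^{\beta_1-1}\,d\lambda}{\int g^{\beta_1}\,d\lambda}.
\]
Using the factorization
\[
e^{rH_{\nu}^{\alpha}(q_n)}D_{\nu}(q_n) = \bigl[e^{rH_{\mu}^{\alpha}(q_n)}D_{\mu}(q_n)\bigr]\cdot\frac{D_{\nu}(q_n)}{D_{\mu}(q_n)}\cdot\Bigl(e^{(1-\alpha)(H_{\nu}^{\alpha}(q_n)-H_{\mu}^{\alpha}(q_n))}\Bigr)^{r/(1-\alpha)},
\]
together with the known limit $e^{rH_{\mu}^{\alpha}(q_n)}D_{\mu}(q_n)\to Q_{\alpha,r}(\mu)=C(r)(\int g^{\beta_1}\,d\lambda)^{\beta_2}$ and the elementary identities $\beta_1-1=-r/\beta_2$ and $\beta_1-\alpha=(1-\alpha)/\beta_2$, the aggregate power $(\int g^{\beta_1}\,d\lambda)^{\beta_2-1-r/(1-\alpha)}$ collapses to $1$ and the surviving factors rearrange into the closed form claimed in (\ref{relth02}).

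The main obstacle is the uniform-in-$n$ control required to interchange the three limits $n\to\infty$, $\epsilon\to0$, $N\to\infty$. The boundedness of $\phi$ is indispensable for dominating $\nu$-quantities by $\mu$-quantities on $A_N^c$, while Lemma~\ref{lemmaux} only certifies negligibility of a single cell at a time, so the partition parameters $K$ and $N$ must be held fixed while $n\to\infty$ and refined only afterwards. A minor subtlety is to replace $\phi_k-\epsilon$ by $\max(\phi_k-\epsilon,0)$ on sub-intervals where $\phi$ is small, so that all lower bounds remain meaningful.
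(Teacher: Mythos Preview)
Your proposal is correct and follows essentially the same route as the paper: freeze $\phi=f/g$ on a fine partition, apply the entropy density result (Theorem~\ref{theoremdens}) piecewise with Lemma~\ref{lemmaux} handling boundary cells, dominate tails via $\nu\le M\mu$, and then assemble (\ref{relth02}) from (\ref{relth01}) and the distortion density. Two execution differences are worth noting. First, for (\ref{relth02}) the paper does not repeat the partition argument; it writes
\[
e^{rH_\nu^\alpha(q_n)}D_\nu(q_n)=\Bigl(e^{(1-\alpha)(H_\nu^\alpha(q_n)-H_\mu^\alpha(q_n))}\Bigr)^{r/(1-\alpha)}\int (f/g)\,dM_g^n
\]
and invokes the weak convergence $M_g^n\Rightarrow M_g$ of Corollary~\ref{corollary123}(ii) on the bounded, $M_g$-a.e.\ continuous integrand $f/g$, which is shorter than your piecewise use of Corollary~\ref{corollary123}(i). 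Second, the paper truncates via the level set $I_\delta=\{f\ge\delta\}$ (compact by weak unimodality of $f$) rather than $A_N=(-N,N]$; since $f\le Mg$ forces $g\ge\delta/M>0$ on $I_\delta$, uniform continuity of $\phi$ there is automatic. Your $A_N$ can reach points near $\partial\supp(g)$ where $g\to0$ and $\phi$, while bounded, need not be uniformly continuous, so your partition into $\epsilon$-oscillation sub-intervals may not exist on all of $A_N\cap\supp(g)$. Your scheme goes through unchanged if you replace $A_N$ by $I_\delta$, or add a second inner truncation away from $\partial\supp(g)$ and control its contribution by $M^\alpha\hat\mu(\cdot)$ exactly as you do for the outer tail.
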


\begin{remark}
\label{rem_mismatch}
(a) \  The mismatch formula   (\ref{relth02}) is best interpreted through the
companding quantization approach. In  \cite[Remark 4.12]{KrLi11} it
was shown  that for a source with density $g$,  companding
quantizers having  point density  $h$ induce
high-rate asymptotics performance proportional to
\begin{equation}
\label{functio}
\left( \int g^{\alpha } h^{1-\alpha }\,  d \lambda
\right)^{\frac{r}{1-\alpha }} \int g h^{-r} \, d \lambda.
\end{equation}
Asymptotically optimal companding is obtained by setting
$h = g_{\alpha,r} = g^{1/\beta_{2}} ( \int g^{1/\beta_{2}}\, d \lambda
)^{-1}$, which is the unique minimizer of (\ref{functio}). If  the sequence of
companding quantizers with this choice of $h$ is now applied 
to the mismatched distribution $\nu = f \lambda$, then the same
asymptotic performance as in (\ref{relth02}) is obtained. Thus the
main significance of  (\ref{relth02}) is that 
it holds for an \emph{arbitrary} 
asymptotically optimal sequence $(q_n)$.
The analogy with  companding quantization  
suggests that although $(q_{n})$ can have infinitely many codecells,
one  can interpret $g_{\alpha,r}$ as
the point density related to every asymptotically optimal  sequence of
quantizers  for $\mu = g \lambda$.

\smallskip

\noindent (b)\ Using the notation introduced in Sections~\ref{seq_asymptcond} and  \ref{subsec_mismatch}, we can rewrite the
mismatch formula (\ref{relth02}) in the equivalent forms
\begin{eqnarray*}
\lim_{n\to \infty}  e^{r H_{\nu }^{\alpha }(q_{n})} D_{\nu}(q_n) &=& 
C(r)  e^{-r
   \mathcal{D}_{\alpha}(f\|g_{\alpha,r})} \int
 \frac{f}{\bigl(g_{\alpha,r}\bigr)^r} \, d\lambda \\
&=&  C(r) \|f\|_{\frac{1}{1+r}}  e^{r(
   \mathcal{D}_{1+r}(f_{0,r}\|g_{\alpha,r}) -
   \mathcal{D}_{\alpha}(f\|g_{\alpha,r}))} \\
&=& Q_{0,r}(\nu) e^{r(
   \mathcal{D}_{1+r}(f_{0,r}\|g_{\alpha,r}) -
   \mathcal{D}_{\alpha}(f\|g_{\alpha,r}))}.
\end{eqnarray*}
Formula (\ref{eq_alphaloss}) for the loss due to mismatch follows
from either of the last two expressions. The loss is always greater
than one unless $\mu=\nu$ since, according to the preceding comment,
$h=f_{\alpha,r}$ is the unique minimizer of  $ e^{r(
   \mathcal{D}_{1+r}(f_{0,r}\|h) -
   \mathcal{D}_{\alpha}(f\|h))}$ over all densities $h$. 

\smallskip

\noindent(c) The condition for the boundedness of  $f/g$ is the same as
  in the variable-rate mismatch result of \cite{GrLi03}. The
  fixed-rate result of Bucklew \cite{Buc84} requires essentially the
  same condition since   the only known example when
  the uniform   integrability condition given there  is satisfied
  requires that  $f/g$ be  bounded. 
  
\smallskip

\noindent (d) The conditions of Theorem~\ref{mresul} are satisfied
when the support of $\mu$ and $\nu$ is the same compact interval $I$
and the corresponding densities $g$ and $f$ are continuous and bounded
away from zero on $I$.  But the theorem may also apply to
distributions with unbounded support. For example, if $g$ and $f$ are
Gaussian or Laplacian densities with mean zero and variance
$\sigma_\mu^2$ and $\sigma_\nu^2$, respectively, then the conditions
are met if $\sigma_\mu^2\ge \sigma_\nu^2$.  Unfortunately, the
boundedness condition is not satisfied when
$\sigma_\mu^2<\sigma_\nu^2$ or when $g$ is Gaussian and $f$ is
Laplacian.  Na \cite{Na11} obtained a mismatch result for two
zero-mean Laplacian sources with arbitrarily mismatched variances by
considering quantile quantizers, a special class of fixed-rate
asymptotically optimal quantizers closely related to companding
quantizers.


\end{remark}

\begin{proof}
Let $I = \supp (\nu )$.
We will proceed in several steps. 

\smallskip

\noindent 1. First we prove relation (\ref{relth01}) under  the stated
assumptions on $g$ and $f/g$, but additionally  assuming that  $I$ 
is a compact interval and 
\begin{equation}
\label{eqfmin}
 \min \{ f(x) : x \in I \} >0.  
\end{equation}
Let $m \geq 2$ and let $\{I_{k,m}:\, k=1,\ldots,m\}$ be a collection
  of disjoint intervals of equal length $\lambda(I)/m$ such that
  $\bigcup_{k=1}^m I_{k,m}=I$. Let $l_{k,m}=\inf I_{k,m}$ and
  $r_{k,m}=\sup I_{k,m}$ denote, respectively, the left and right
  endpoints of $I_{k,m}$. Define
\[
S_{m, n} = \bigcup_{k=1}^{m} \{ q_{n}(l_{k,m}) , q_{n}(r_{k,m})  \} \subset q_{n}(\mathbb{R})
\]
and
\[
i(f,I) = \min \{ f(x) : x \in I \}, \qquad s(f,I) = \max \{ f(x) : x \in I \}  .
\]
Note that $\card(S_{m,n}) \leq m+1$ and  $0<i(f,I)\le s(f,I)<\infty$. 
Since  $f/g \leq M$ for some $M<\infty$, we have
\begin{equation}
\label{sfig}
\nu ( A ) \leq 
M \mu ( A )
\end{equation}
for any measurable $A \subset \mathbb{R}$. 
Thus by (\ref{entrconvzero}) in Lemma~\ref{lemmaux} we get
\begin{eqnarray*}
\limsup_{n \to \infty } 
\frac{  \sum_{a \in S_{m,n}} \nu ( q_{n}^{-1}(a)  )^{\alpha } }
{ \sum_{a \in q_{n}(\mathbb{R})} \mu ( q_{n}^{-1}(a) )^{\alpha }  }  
& \leq & M^{\alpha }
\limsup_{n \to \infty } 
\frac{ \sum_{a \in S_{m,n}} \mu ( q_{n}^{-1}(a)  )^{\alpha } }
{ \sum_{a \in q_{n}(\mathbb{R})} \mu ( q_{n}^{-1}(a) )^{\alpha }  }   
= 0.
\end{eqnarray*}
Noting that for any $a\in q_n(\R)\setminus S_{m,n}$ we either have $q_n^{-1}(a) \subset
I_{k,m}$ for some $k\in\{1,\ldots,m\}$ or $\nu(q_n^{-1}(a))=0$, the above implies 
\begin{eqnarray}
\lefteqn{ \limsup_{n \to \infty } e^{(1-\alpha )( H_{\nu }^{\alpha
    }(q_{n}) - H_{\mu }^{\alpha }(q_{n}) )}}  \nonumber \\
&=&
\limsup_{n \to \infty }
\frac
{\sum_{a \in q_{n}(\mathbb{R})} \nu ( q_{n}^{-1}(a) )^{\alpha }  }
{\sum_{a \in q_{n}(\mathbb{R})} \mu ( q_{n}^{-1}(a) )^{\alpha }  } \nonumber \\
&=&
\sum_{k=1}^{m}
\limsup_{n \to \infty }  
\frac 
{\sum_{a \in q_{n}(\mathbb{R})} \nu ( q_{n}^{-1}(a) \cap I_{k,m} )^{\alpha }  }
{\sum_{a \in q_{n}(\mathbb{R})} \mu ( q_{n}^{-1}(a) )^{\alpha }  } \nonumber \\
&=&
\sum_{k=1}^{m}
\limsup_{n \to \infty }  \frac 
{\sum_{a \in q_{n}(\mathbb{R})} 
\left( \frac{\nu ( q_{n}^{-1}(a) \cap I_{k,m} )}{\mu ( q_{n}^{-1}(a) \cap I_{k,m} )} \right) ^{\alpha } 
\mu ( q_{n}^{-1}(a) \cap I_{k,m} )^{\alpha }  }
{\sum_{a \in q_{n}(\mathbb{R})} \mu ( q_{n}^{-1}(a) )^{\alpha }  }.
\label{entrlimrep}
\end{eqnarray}
Now we observe that $f/g\le M$ and (\ref{eqfmin}) imply  for all $n\ge
1$, $k\in \{1,\ldots,m\}$,  and  $a \in q_{n}(\mathbb{R})$,
\begin{equation}
0 < \frac{i(f, I_{k,m})}{s(g, I_{k,m})} 
\leq 
\frac{\nu ( q_{n}^{-1}(a) \cap I_{k,m} ) }{ \mu ( q_{n}^{-1}(a) \cap I_{k,m} ) }
\leq
\frac{s(f, I_{k,m})}{i(g, I_{k,m})} < \infty .
\label{mismeas}
\end{equation}
Combining (\ref{entrlimrep}) and (\ref{mismeas}) we deduce from
(\ref{propentrloc}) in   Theorem~\ref{theoremdens} that
\begin{eqnarray}
\limsup_{n \to \infty }  e^{(1-\alpha )( H_{\nu }^{\alpha }(q_{n}) - H_{\mu }^{\alpha }(q_{n}) )}
& \leq &
\sum_{k=1}^{m}
\left( \frac{s(f, I_{k,m})}{i(g, I_{k,m})} \right) ^{\alpha } 
\limsup_{n \to \infty } \frac 
{\sum_{a \in q_{n}(\mathbb{R})} 
\mu ( q_{n}^{-1}(a) \cap I_{k,m} )^{\alpha }  }
{\sum_{a \in q_{n}(\mathbb{R})} \mu ( q_{n}^{-1}(a) )^{\alpha }  } \nonumber \\
&=&
\sum_{k=1}^{m}
\left( \frac{s(f, I_{k,m})}{i(g, I_{k,m})} \right) ^{\alpha } 
\frac{\int_{I_{k,m}} g^{\beta_{1}} \, d \lambda }{ \int g^{\beta_{1}} \,
  d \lambda  } 
=
\int \overline{h}_{m}\,  d \hat{\mu}
\label{ubourhs}
\end{eqnarray}
where $\hat{\mu}$ is defined in (\ref{defmuhat}) and  we have defined
\[
\overline{h}_{m} = \sum_{k=1}^{m} 1_{I_{k,m}} \left(  \frac{s(f, I_{k,m})}{i(g, I_{k,m})}  \right)^{\alpha }.
\]
Here $1_{A}$ denotes  the characteristic function of $A\subset \R$
defined by $1_A(x)=1$ if $x\in A$ and $1_A(x)=0$ if $x\notin
A$. Similarly we obtain 
\begin{eqnarray}
\liminf_{n \to \infty } e^{(1-\alpha )( H_{\nu }^{\alpha }(q_{n}) - H_{\mu }^{\alpha }(q_{n}) )}
& \geq &
\int \underline{h}_{m}\,  d \hat{\mu}
\label{lbourhs}
\end{eqnarray}
with
\[
\underline{h}_{m} =
\sum_{k=1}^{m} 1_{I_{k,m}} \left(  \frac{i(f, I_{k,m})}{s(g, I_{k,m})}  \right)^{\alpha }.
\]
Obviously, $\underline{h}_{m} \leq \overline{h}_{m}$, and since 
$f$, $g$, and $f/g$  are continuous on $I$ and the common length of
the intervals $I_{k,m}$ converges to zero as $m\to \infty$, 
\[
\lim_{m\to \infty} \underline{h}_{m}(x)= \lim_{m\to \infty}
\overline{h}_{m}(x) = \bigl(f(x)/g(x)\bigr)^{\alpha }\quad \text{for all $x\in I$}.
\]
Since the $\overline{h}_m$ are uniformly bounded, from Fatou's lemma
and by dominated convergence,  we get
\begin{eqnarray}
\int (f/g)^{\alpha }\,  d \hat{\mu} &=&
\int \liminf_{m \to \infty} \underline{h}_{m}\,  d \hat{\mu} 
\leq 
\liminf_{m \to \infty} \int  \underline{h}_{m}\,  d \hat{\mu} \nonumber \\
& \leq &
\limsup_{m \to \infty} \int  \overline{h}_{m}\,  d \hat{\mu} = \int
(f/g)^{\alpha }\,  d \hat{\mu}. 
\label{limilims}
\end{eqnarray}
Combining (\ref{limilims}) with (\ref{ubourhs}) and (\ref{lbourhs}) we obtain
\[
\lim_{n\to \infty} e^{(1-\alpha )( H_{\nu }^{\alpha }(q_{n}) - H_{\mu }^{\alpha }(q_{n})
  )} = \int (f/g)^{\alpha }\,  d \hat{\mu}.
\]
By the definition  of $\hat{\mu}$ in (\ref{defmuhat}) this  yields
(\ref{relth01}). 

\smallskip 

\noindent 2. We now prove relation (\ref{relth01}) under the stated
assumptions.  Since $f$ is weakly unimodal, the set $I_{\delta} = \{x:\, f(x)\ge
\delta\}\subset I$ is a compact interval for all $\delta>0$ small
enough. Since $\bigcup_{\delta>0} I_{\delta}= I$,
  we have  $\nu ( I
  \setminus I_{\delta} ) \to 0$ as $\delta \to 0$, and we also have 
   $\hat{\mu}( I \setminus I_{\delta} ) \to 0$ as $\delta \to 0$  because
  $\hat{\mu}(\cdot|I)$ is absolutely continuous with respect to $\nu$.
 Consequently,
\begin{eqnarray}
\lefteqn{ \limsup_{n\to \infty} \biggl| \int_{I}  (f/g)^{\alpha }\,  d \hat{\mu} - \nu (I_{\delta })^{-
  \alpha } \int_{I_{\delta}}  (f/g)^{\alpha }\,  d \hat{\mu}  \biggr|}
 \nonumber \quad \quad  \\ 
&\le&     M^{\alpha } \bigl( \hat{\mu}(I \setminus I_{\delta }) + | 1 - \nu
 (I_{\delta })^{- \alpha } | \hat{\mu}(I_{\delta }) \bigr)\to 0 \label{1third}
\end{eqnarray}
as $\delta\to 0$. Set $[c_{\delta},d_{\delta}]:= I_{\delta}$. Using (\ref{sfig}) and  (\ref{entrconvzero})  in Lemma~\ref{lemmaux} we obtain 
\begin{eqnarray}
\lefteqn{
\limsup_{n \to \infty } 
\biggl| e^{(1-\alpha )( H_{\nu }^{\alpha }(q_{n})  - H_{\mu }^{\alpha }(q_{n}) )} -
e^{(1-\alpha )( H_{\nu ( \cdot | I_{\delta } ) }^{\alpha }(q_{n})  -
  H_{\mu }^{\alpha }(q_{n}) )} \biggr| }  \nonumber  \\
&=&
\limsup_{n \to \infty } 
\biggl| \frac{ \sum_{a \in q_{n}(\mathbb{R})} \nu ( q_{n}^{-1}(a) )^{\alpha } -
\nu (I_{\delta })^{- \alpha }  \sum_{a \in q_{n}(\mathbb{R})} \nu ( q_{n}^{-1}(a) \cap I_{\delta } )^{\alpha }  }
{ \sum_{a \in q_{n}(\mathbb{R})} \mu ( q_{n}^{-1}(a) )^{\alpha }   }
\biggr|  \nonumber  \\
& \leq &
M^{\alpha } \limsup_{n \to \infty } 
\frac{ \sum_{a\in q_n(\R):\, q_n^{-1}(a) \subset I \setminus I_{\delta} } 
\mu ( q_{n}^{-1}(a) )^{\alpha } }
{ \sum_{a \in q_{n}(\mathbb{R})} \mu ( q_{n}^{-1}(a) )^{\alpha }   }
 \nonumber    \\
&& 
\mbox{} + 
M^{\alpha } \limsup_{n \to \infty } 
\frac{ \mu ( q_{n}^{-1}(q_{n}( c_{\delta}  )) )^{\alpha } +
\mu ( q_{n}^{-1}(q_{n}(d_{\delta}  )) )^{\alpha } }
{ \sum_{a \in q_{n}(\mathbb{R})} \mu ( q_{n}^{-1}(a) )^{\alpha }   }
  \nonumber  \\
&& \mbox{} +
\limsup_{n \to \infty } \biggl|
\frac{ \sum_{a\in q_n(\R):\,  q_{n}^{-1}(a) \subset I_{\delta} } 
\nu ( q_{n}^{-1}(a) )^{\alpha }  - 
\nu (I_{\delta })^{- \alpha }  \sum_{a \in q_{n}(\mathbb{R})} \nu ( q_{n}^{-1}(a) \cap I_{\delta } )^{\alpha }
 }
{ \sum_{a \in q_{n}(\mathbb{R})} \mu ( q_{n}^{-1}(a) )^{\alpha }   }
\biggr|  \nonumber  \\
& = &
M^{\alpha } \limsup_{n \to \infty } 
\frac{ \sum_{a \in q_{n}(\mathbb{R}) } 
\mu ( q_{n}^{-1}(a) \cap I \setminus I_{\delta })^{\alpha } }
{ \sum_{a \in q_{n}(\mathbb{R})} \mu ( q_{n}^{-1}(a) )^{\alpha }   }
 \nonumber   \\
&&\mbox{}  +
\limsup_{n \to \infty } \biggl|
\frac{ \sum_{a:\in q_{n}(\mathbb{R}) : q_{n}^{-1}(b) \subset I_{\delta} } 
\nu ( q_{n}^{-1}(a) \cap I_{\delta } )^{\alpha }  - 
\nu (I_{\delta })^{- \alpha }  \sum_{a \in q_{n}(\mathbb{R})} \nu ( q_{n}^{-1}(a) \cap I_{\delta } )^{\alpha }
 }
{ \sum_{a \in q_{n}(\mathbb{R})} \mu ( q_{n}^{-1}(a) )^{\alpha }   }
\biggr| \nonumber   \\
& \leq &
M^{\alpha } \hat{\mu}(I \setminus I_{\delta }) \nonumber  \\
&& \mbox{} +
\limsup_{n \to \infty }\biggl|
\frac{ ( 1 - \nu (I_{\delta })^{- \alpha } )  \sum_{a\in q_n(\R): \, q_{n}^{-1}(a) \subset I_{\delta}  } 
\nu ( q_{n}^{-1}(a) \cap I_{\delta } )^{\alpha }   }
{ \sum_{a \in q_{n}(\mathbb{R})} \mu ( q_{n}^{-1}(a) )^{\alpha }
}\biggr| \nonumber   \\
&& 
 \mbox{} + 
\nu (I_{\delta })^{- \alpha } M^{\alpha } \limsup_{n \to \infty } 
\frac{ \mu ( q_{n}^{-1}(q_{n}( c_{\delta}  )) )^{\alpha } +
\mu ( q_{n}^{-1}(q_{n}( d_{\delta}  )) )^{\alpha } }
{ \sum_{a \in q_{n}(\mathbb{R})} \mu ( q_{n}^{-1}(a) )^{\alpha }   }
\nonumber  \\ 
&\leq &
M^{\alpha } ( \hat{\mu}(I \setminus I_{\delta }) + 1 - \nu (I_{\delta
})^{- \alpha }  )\to 0  \label{2third}
\end{eqnarray}
as $\delta\to 0$. Noting that the density of $\nu(\cdot|I_{\delta})$ satisfies the
condition imposed on $f$ in step 1, we obtain from this step that 
\begin{equation}
\label{3third}
\lim_{n\to \infty}  e^{(1-\alpha )( H_{\nu ( \cdot | I_{\delta } ) }^{\alpha }(q_{n})  - H_{\mu }^{\alpha }(q_{n}) )} 
= \nu (I_{\delta })^{- \alpha } \int_{I_{\delta }} (f/g)^{\alpha }\,
d \hat{\mu } 
\end{equation}
Combining  (\ref{1third}),(\ref{2third}), and (\ref{3third}) we obtain
that   given any $\varepsilon>0$ we can can
choose $\delta=\delta(\varepsilon)>0$ small enough and
$N=N(\delta,\varepsilon)$ large enough such that for all $n>
N$, 
\[
\biggl| e^{(1-\alpha )( H_{\nu }^{\alpha }(q_{n})  - H_{\mu }^{\alpha }(q_{n}) )} -
\int_{I}  (f/g)^{\alpha } \, d \hat{\mu} \biggr| < \varepsilon
\]
which yields (\ref{relth01}).

\smallskip

\noindent 3. We finish the proof by proving assertion (\ref{relth02}). 
Using definition  (\ref{defmgn}) we get
\[
e^{r H_{\nu }^{\alpha }(q_{n})} \int | x - q_{n}(x) |^{r} \, d \nu (x) =
\left( e^{(1-\alpha ) 
(H_{\nu }^{\alpha }(q_{n})-H_{\mu }^{\alpha }(q_{n}))}
\right)^{\frac{r}{1-\alpha }} \int (f/g) \, d M_{g}^{n}.
\]
Thus Corollary~\ref{corollary123} (ii) and (\ref{relth01}) yield
\[
\lim_{n\to \infty} e^{r H_{\nu }^{\alpha }(q_{n})} \int | x - q_{n}(x) |^{r}\,  d \nu (x) =
\left( \frac{\int (f/g)^{\alpha } 
g^{\beta_{1}}\,  d \lambda }{ \int g^{\beta_{1}}\,  d \lambda } \right)^{\frac{r}{1-\alpha }} 
\int (f/g)\,  d M_{g}.
\]
Using (\ref{mgedef}) we calculate
\begin{eqnarray}
\lefteqn{ \lim_{n \to \infty} e^{r H_{\nu }^{\alpha }(q_{n})} \int | x -
  q_{n}(x) |^{r} \, d \nu (x) } \nonumber \qquad  \\ 
&=&
\left( \frac{\int (f/g)^{\alpha } 
g^{\beta_{1}}\,  d \lambda }{ \int g^{\beta_{1}}\,  d \lambda } \right)^{\frac{r}{1-\alpha }} 
C(r) \left( \int_{\mathbb{R}} 
g^{\beta_{1}} \, d \lambda \right)^{\frac{r}{1 - \alpha }} 
\left( \int (f/g)g^{\beta_{1}}\,  d \lambda \right) \nonumber \\
&=& 
C(r) \left( \int f^{\alpha } 
g^{\beta_{1}-\alpha } \, d \lambda  \right)^{\frac{r}{1-\alpha }} 
\left( \int f g^{\beta_{1}-1}\,  d \lambda \right).
\label{convmism}
\end{eqnarray}
Using definition (\ref{defpar1}) it is  easy to check that
$\beta_{1} - \alpha = \frac{1 - \alpha }{ \beta_{2}}$ and  $\beta_{1}
- 1 = - \frac{r}{\beta_{2}}$, and hence  (\ref{convmism}) is equivalent to  (\ref{relth02}),
completing the proof. 
\end{proof}

\section{Concluding remarks}

\label{sec_concl}

We extended point density and mismatch results in fixed and
variable-rate asymptotic quantization theory to scalar quantization
with R\'enyi entropy constraint of order $\alpha\in (0,1)$. We showed
that the R\'enyi entropy contribution of a given interval to the
overall rate for a sequence of asymptotically optimal quantizers is
determined by the so-called entropy density of the sequence, an
analog of the traditional quantizer point density function. A dual of
this result quantifies the distortion contribution of a given region
to the overall distortion. We also proved a mismatch formula for a
sequence of asymptotically optimal R\'enyi entropy constrained scalar
quantizers. One can recover the known results for the traditional rate
definitions by formally setting $\alpha=0$ or $\alpha=1$ in our
density and mismatch results.

A natural question is whether the density and mismatch results of this
paper can be generalized to higher dimensional (vector)
quantization. To make progress in this direction, one first needs to
generalize Theorem~\ref{thmkrlmain} to higher dimensions
(cf.\ \cite[Section VIII]{KrLi11}) to obtain an analog of Zador's
fixed and variable-rate vector quantization results for R\'enyi
entropy constraint. Assuming one can prove such a result, the main
difficulty in generalizing our proofs seems to be controlling the
entropy contribution at the boundary of hypercubes (higher-dimensional
intervals). 

Another interesting question is whether the coincidence of distortion
and entropy densities described by (\ref{intcd}) in
Remark~\ref{rem_coinc} is particular to quantization with R\'enyi
entropy or is a deeper phenomenon.  In particular, one can ask whether
replacing R\'enyi's entropy with some more general information measure
(c.f.\ \cite{Csi08}) would preserve the existence of and the
special relationship between entropy and distortion densities. Answers
to  these questions  would provide a more complete understanding of
some of the  finer aspects of quantization theory.

As mentioned before, an analog of the fixed-rate point density result
of Bucklew \cite{Buc84} (see \eqref{eq_pointdensbucklew}) cannot hold
for arbitrary sequences of asymptotically optimal entropy-constrained
quantizers. However, point densities play an important role in our
intuitive understanding of the structure of optimal quantizers, and
may provide (heuristic) guidance in constructing (nearly) optimal
quantizers. Thus it would be interesting to find a framework within
which rigorous point density result can be proved for R\'enyi entropy
constrained quantization (and for traditional entropy-constrained
quantization). For the scalar case, companding quantization provides
such a framework, but for higher dimensions, the restriction to
companding usually precludes asymptotic optimality \cite{Buc81}.

\section{Appendix}

\noindent\emph{Proof of Lemma~\ref{lemmaux}.} \    We  first show
(\ref{exqkoeff}).   
The asymptotic optimality of $(q_n)$ for
$\mu$ implies that $D_{\mu}(q_n)\to 0$ as $n\to \infty$. Since $\mu$
has a density, this  yields, via Lemma~\ref{lemmaxprob} below,
the intuitively obvious fact that 
\[
\lim_{n\to \infty} \max \{ \mu ( q_{n}^{-1}(a)): a \in
  q_{n}(\mathbb{R})  \} =0. 
\]
This also
gives for  $i\in \{1,2\}$,
\begin{equation}
\label{convzmaxp2}
\lim_{n\to \infty} \max \{ \mu_i( q_{n}^{-1}(a)): a \in
  q_{n}(\mathbb{R})  \} =0. 
\end{equation}
Let $p=(p_1,p_2,\ldots)$ be a probability vector and $p_{\text{max}}=
\max\{p_i: \, i\in \mathbb{N} \}$. Since  $\alpha\in (0,1)$, we can
lower bound $ \hat{H}^{\alpha}(p)$ as 
\begin{eqnarray*}
    \hat{H}^{\alpha}(p)= \frac{1}{1-\alpha} \log \left(
    \sum\limits_{i=1}^{\infty} 
p_{i}^{\alpha} \right) &=&  \frac{1}{1-\alpha} \log
    \left(p_{\text{max}}^{\alpha} 
\sum\limits_{i=1}^{\infty} 
\biggl(\frac{p_{i}}{p_{\text{max}}}\biggr)^{\alpha} \right)\\
&\ge & \frac{1}{1-\alpha} \log
    \left(p_{\text{max}}^{\alpha} 
\sum\limits_{i=1}^{\infty} 
\frac{p_{i}}{p_{\text{max}}} \right)\\*
&= & - \log
  p_{\text{max}}.
\end{eqnarray*}
Combing this bound with (\ref{convzmaxp2}) yields
(\ref{exqkoeff}).

Next we prove  (\ref{entrconvzero})  by contradiction. 
 If  the first limit in (\ref{entrconvzero}) does not hold, then there
 is a $T>0$  
and a subsequence of $(q_{n})$, which we also denote by $(q_{n})$, such that
\begin{equation}
\label{convtot}
\lim_{n\to \infty} \frac{\mu ( q_{n}^{-1}(q_{n}(p)) )^{\alpha }}{ \sum_{a \in q_{n}(\mathbb{R})} \mu ( q_{n}^{-1}(a) )^{\alpha }  } 
= T.
\end{equation}
We have 
\begin{eqnarray}
H_{\mu }^{\alpha }(q_{n}) 
&=& \frac{1}{1 - \alpha } \log \left( \frac{\sum_{ a \in q_{n}(\mathbb{R})
      }   \mu ( q_{n}^{-1}(a) )^{\alpha }}{\mu ( q_{n}^{-1}(
  q_{n}(p) ) )^{\alpha }}   \right) \nonumber \\*
& & \mbox{}  +  \frac{\alpha }{1 - \alpha } \log \left( \mu ( q_{n}^{-1}( q_{n}(p) ) ) \right)
  \nonumber \\
& \le & \frac{1}{1 - \alpha } \log \left( \frac{\sum_{ a \in q_{n}(\mathbb{R})
     }   \mu ( q_{n}^{-1}(a) )^{\alpha }}{\mu ( q_{n}^{-1}(
  q_{n}(p) ) )^{\alpha }}   \right)  \label{entrsplit}
\end{eqnarray}
where the inequality holds since $\alpha \in (0,1)$. 
Because $(q_{n})$ is asymptotically optimal, we know that $H_{\mu
}^{\alpha }(q_{n}) \to \infty$ as $n\to \infty$.
But the right hand side of (\ref{entrsplit}) converges to a  finite limit
by assumption (\ref{convtot}), a contradiction.

Also,  (\ref{exqkoeff}) and an argument
 identical to the proof of the first limit in (\ref{entrconvzero})
 imply that for all 
 $p\in \R$ and $i\in\{1,2\}$,
\[
\lim_{n\to \infty} \frac{\mu_i ( q_{n}^{-1}(q_{n}(p)) )^{\alpha }}{ \sum_{a \in q_{n}(\mathbb{R})} \mu_i ( q_{n}^{-1}(a) )^{\alpha }  } 
= 0 
\]
which completes the proof.   \qed

\begin{lemma}
\label{lemmaxprob}
Assume  $\mu$ is a  probability measure on $\R^d$, let  $r>0$, and let
$\|\cdot\|$ be any norm on $\R^d$.
Suppose  $(q_n)$ is a sequence of $d$-dimensional vector
quantizers (mappings $q_n: \R^d \to \R^d$ with  $q_n(\R)$ at
most countable) such that 
\[
  \lim_{n\to \infty} \int_{\R^d} \|x-q_n(x)\|^r \, \mu(dx) =0.
\]
Then 
\begin{equation}
\label{convzmaxp1}
\lim_{n\to \infty} \max \{ \mu ( q_{n}^{-1}(a)): a \in
  q_{n}(\mathbb{R}^d)  \} =0
\end{equation}
if and only if $\mu$ is \emph{nonatomic}, i.e., $\mu(\{x\})=0$ for all
$x\in \R^d$. 
\end{lemma}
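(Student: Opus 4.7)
\medskip

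The ``only if'' direction is immediate by contrapositive: if $\mu$ has an atom at some $x_0$ with $\mu(\{x_0\})=c>0$, then for every $n$ the cell $q_n^{-1}(q_n(x_0))$ contains $x_0$ and hence has $\mu$-measure at least $c$, so the maximum in (\ref{convzmaxp1}) is bounded below by $c$ and cannot tend to zero. For the substantive ``if'' direction I would argue by contradiction: assume $\mu$ is nonatomic yet, after passing to a subsequence, there exist $\varepsilon>0$ and codepoints $a_n\in q_n(\R^d)$ such that $\mu(C_n)\ge \varepsilon$, where $C_n:=q_n^{-1}(a_n)$. The strategy is to extract a limit point $a^*$ of $(a_n)$ and show, using the distortion hypothesis and Markov's inequality, that $\mu(\{a^*\})\ge \varepsilon/2$, producing the sought contradiction.

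The first step is to establish boundedness of $(a_n)$. Since $\mu$ is a probability measure, pick $M>0$ with $\mu(\R^d\setminus B(0,M))<\varepsilon/2$, so $\mu(C_n\cap B(0,M))\ge \varepsilon/2$ for all $n$. If $\|a_n\|\to\infty$ along a further subsequence, then for $n$ large and $x\in C_n\cap B(0,M)$ we have $\|x-a_n\|\ge \|a_n\|-M$, which gives
\[
\int\|x-q_n(x)\|^r\,d\mu \;\ge\; \int_{C_n\cap B(0,M)}\|x-a_n\|^r\,d\mu \;\ge\; \tfrac{\varepsilon}{2}(\|a_n\|-M)^r\to\infty,
\]
contradicting the hypothesis. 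Hence a further subsequence yields $a_n\to a^*$ for some $a^*\in\R^d$. Next, I would apply Markov's inequality on each cell: for any $\delta>0$,
\[
\mu(\{x\in C_n:\|x-a_n\|\ge \delta\}) \;\le\; \delta^{-r}\int_{C_n}\|x-a_n\|^r\,d\mu \;\le\; \delta^{-r}\int\|x-q_n(x)\|^r\,d\mu\to 0,
\]
so $\liminf_n \mu(C_n\cap B(a_n,\delta))\ge \varepsilon$. Since $a_n\to a^*$, for any $\eta>0$ we have $B(a_n,\delta)\subset B(a^*,\delta+\eta)$ for $n$ large, hence $\mu(B(a^*,\delta+\eta))\ge \varepsilon/2$. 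Letting $\delta,\eta\downarrow 0$ and using continuity of $\mu$ from above along a decreasing sequence of balls gives $\mu(\{a^*\})\ge \varepsilon/2>0$, contradicting nonatomicity.

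The main obstacle is the boundedness of $(a_n)$: without it, the ``suspicious'' cells $C_n$ could be paired with codepoints escaping to infinity, and there would be no limit point $a^*$ at which to localize an atom. Once boundedness is secured by playing the tightness of $\mu$ against the divergent distortion that any distant codepoint would incur over a mass-carrying region, the remainder is a standard concentration-plus-continuity argument. The rest of the steps (Markov's inequality, passage to a convergent subsequence, continuity of measure on nested balls) are routine, so I would expect the write-up to be short once the boundedness reduction is in place.
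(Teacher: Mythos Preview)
Your proposal is correct and follows essentially the same route as the paper's proof: both argue the ``only if'' direction by the trivial inclusion $\{x_0\}\subset q_n^{-1}(q_n(x_0))$, and for the ``if'' direction both proceed by contradiction, use a Markov-type bound to show the mass of the offending cells concentrates in balls $B(a_n,\delta)$, invoke tightness to get boundedness of $(a_n)$, extract a convergent subsequence $a_n\to a$, and conclude $\mu(\{a\})\ge\varepsilon$. The only cosmetic difference is ordering: you establish boundedness of $(a_n)$ first (via $\mu(C_n\cap B(0,M))\ge\varepsilon/2$ and the triangle inequality) and then apply Markov, whereas the paper applies Markov first to get $\liminf_n\mu(B(a_n,\delta))\ge\varepsilon$ and deduces boundedness from that; also, your final bound $\mu(\{a^*\})\ge\varepsilon/2$ is weaker than the $\varepsilon$ you could have obtained, but this is harmless.
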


\begin{proof}  If $\mu(\{x\})>0$ for some $x$, then
  $\mu(q_n^{-1}(q_n(x)))\ge \mu(\{x\})$ shows that (\ref{convzmaxp1})
  cannot hold. Now assume that $\mu$ is nonatomic. We proceed
  indirectly to prove (\ref{convzmaxp1}). Since $\int  \|x-q_n(x)\|^r \,
  \mu(dx) \ge \int_{q^{-1}(a)} \|x-q_n(x)\|^r \, \mu(dx)$ for all $n$
  and $a\in q_n(\R^d)$, if (\ref{convzmaxp1}) does not hold, then
  (considering subsequences if necessary) there exist an
  $\varepsilon>0$, points  $a_n\in \R^d$, and measurable sets $A_n\subset \R^d$, such
  that 
\begin{equation}
\label{eqdandmubound}
 \lim_{n\to \infty}  \int_{A_n} \|x-a_n\|^r \, \mu(dx)
 =0, \qquad  \mu(A_n)\ge \varepsilon \text{\ for all $n$}.
  \end{equation}
Let $B(z,\delta)=\{x\in \R^d: \|x-z\|< \delta\}$ denote  the
open ball of radius $\delta> 0$ centered at $z\in \R^d$. We have
for all $\delta>0$, 
\[
\int_{A_n} \|x-a_n\|^r \, \mu(dx) \ge \delta^r \mu\bigl(A_n\setminus
B(a_n,\delta)\bigr) 
\]
which, combined with (\ref{eqdandmubound}), implies $\lim_n \mu\bigl(A_n\setminus
B(a_n,\delta)\bigr)  =0$. Thus for all $\delta>0$, 
\[
\liminf_{n\to \infty} \mu\bigl(B(a_n,\delta)\bigr) \ge \varepsilon.
\]
This immediately implies that $\{a_n: n\in \mathbb{N}\}$ is a bounded
set, since   $\limsup_{n} \|a_n\| =\infty$ would yield
$\liminf_n \mu(B(a_n,\delta)) =0$  because, as a probability measure
on $\R^d$,  $\mu$ is tight. Thus we can choose a subsequence of $(a_n)$, which we also
denote by $(a_n)$, such that $ a_n \to a \in \R^d$ as $n\to \infty$. For this
subsequence, $B(a_n,\delta) \subset B(a,2\delta)$ for all $n$ large
enough, implying, for all $\delta>0$, 
\[
\mu\bigl(  B(a,2\delta)\bigr) \ge \liminf_{n\to \infty} \mu\bigl(
B(a_n,\delta)\bigr) \ge \varepsilon.
\]
Since  $\mu(\{a\}) = \lim_{\delta \to 0} \mu\bigl(
B(a,2\delta)\bigr)$, we obtain $\mu(\{a\})\ge
\varepsilon$, which contradicts our assumption that $\mu$ is nonatomic. 
\end{proof}

\begin{lemma}
\label{lowbougen}
Let $A \geq 0$, $B \geq 0$,  $\gamma > 0$, and define $F:(0,1)\to \R$ by
\[
 F(z) = \frac{A}{z^{\gamma}} + \frac{B}{(1-z)^{\gamma}}.
\]
Then
\[
\inf \{ F(z) : z \in (0,1) \} = \left( B^{\frac{1}{1+\gamma }} + A^{\frac{1}{1+\gamma }} \right)^{1 + \gamma }.
\]
If $\min ( A,B ) > 0$, then $F(z_{0}) < F(z)$ for every $z \in
(0,1)\setminus \{ z_{0} \}$, where
\[
z_{0} = \frac{A^{\frac{1}{1+\gamma }}}{ A^{\frac{1}{1+\gamma }} + B^{\frac{1}{1+\gamma }} }.
\]
\end{lemma}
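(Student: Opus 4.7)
The plan is to treat this as an elementary one-variable calculus optimization. First I would handle the degenerate cases. If $A=B=0$ then $F\equiv 0$ and the claimed formula trivially gives $0$. If exactly one of $A,B$ vanishes (say $B=0$, $A>0$), then $F(z)=A/z^{\gamma}$ is strictly decreasing on $(0,1)$ with infimum $A=\bigl(A^{1/(1+\gamma)}\bigr)^{1+\gamma}$ (attained in the limit $z\uparrow 1$, not in $(0,1)$), which matches the claim; the minimizer claim is vacuous since $\min(A,B)=0$.

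The main case is $A,B>0$. Here I would differentiate:
\[
F'(z)= -\gamma A z^{-(1+\gamma)} + \gamma B (1-z)^{-(1+\gamma)}.
\]
Setting $F'(z)=0$ gives $(1-z)^{1+\gamma}/z^{1+\gamma}=B/A$, equivalently $(1-z)/z=(B/A)^{1/(1+\gamma)}$, which has the unique solution
\[
z_{0}=\frac{A^{1/(1+\gamma)}}{A^{1/(1+\gamma)}+B^{1/(1+\gamma)}} \in (0,1).
\]
To confirm this is the unique global minimum, I would observe that
\[
F''(z)= \gamma(1+\gamma)\bigl[A z^{-(2+\gamma)}+B(1-z)^{-(2+\gamma)}\bigr] > 0
\]
throughout $(0,1)$, so $F$ is strictly convex, and hence $F(z_0)<F(z)$ for every $z\in(0,1)\setminus\{z_0\}$.

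It remains to evaluate $F(z_0)$. Writing $S:=A^{1/(1+\gamma)}+B^{1/(1+\gamma)}$, so that $z_{0}=A^{1/(1+\gamma)}/S$ and $1-z_{0}=B^{1/(1+\gamma)}/S$, a direct substitution gives
\[
F(z_0)= A\cdot\frac{S^{\gamma}}{A^{\gamma/(1+\gamma)}}+B\cdot\frac{S^{\gamma}}{B^{\gamma/(1+\gamma)}}
= S^{\gamma}\bigl(A^{1/(1+\gamma)}+B^{1/(1+\gamma)}\bigr)=S^{1+\gamma},
\]
which is exactly the asserted infimum. No step here is really an obstacle; the only mild care needed is verifying that the critical point indeed lies inside $(0,1)$ and handling the degenerate boundary case $AB=0$ separately so that the statement of the lemma is covered uniformly.
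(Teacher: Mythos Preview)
Your proof is correct and follows exactly the approach the paper indicates: the paper's own proof simply disposes of the degenerate cases $A=0$ or $B=0$ and then states that for $A,B>0$ ``the assertion follows from elementary calculus,'' which is precisely the first-derivative/convexity computation you carry out. You have just supplied the details the paper omits.
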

\begin{proof}
The assertion is obvious for the cases $A = 0$, $B = 0$ or $A+B=0$. Thus we can assume that  
$A>0$ and $B>0$. But in this case the assertion follows from elementary calculus.
\end{proof}

A special case of the following lemma has already been used in
\cite{Buc84}. For the reader's convenience we provide a detailed
proof.

\begin{lemma}
\label{lowbouasymp}
Let $r>1$ and $\alpha \in (0, 1)$. Let $E \subset \mathbb{R}$ be measurable. Then,
\[
\left( \int g^{\beta_{1}}\,  d \lambda \right)^{\beta_{2}} =
\inf \left \{ \frac{ \left( \int_{E} g^{\beta_{1}} \, d \lambda \right)^{\beta_{2}} }{R^{\beta_{2} - 1}} + 
\frac{ \left( \int_{\mathbb{R} \setminus E} g^{\beta_{1}}\,  d \lambda \right)^{\beta_{2}} }
{ (1-R)^{\beta_{2} - 1} } : R \in (0,1)  \right \} .
\]
If $\mu(E) \in (0,1)$, then 
\[
\left( \int g^{\beta_{1}}\,  d \lambda \right)^{\beta_{2}} < 
\frac{ \left( \int_{E} g^{\beta_{1}} \, d \lambda \right)^{\beta_{2}} }{R^{\beta_{2} - 1}} + 
\frac{ \left( \int_{\mathbb{R} \setminus E} 
g^{\beta_{1}}\,  d \lambda \right)^{\beta_{2}} }{ (1-R)^{\beta_{2} - 1} }
\]
for every $R \in (0,1) \setminus \{R_{0}\}$, where  $R_{0}=\int_{E} g^{\beta_{1}}\,  d \lambda / \int_{\mathbb{R}}
g^{\beta_{1}} \, d \lambda$.
\end{lemma}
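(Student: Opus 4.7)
The plan is to derive this lemma as a direct specialization of Lemma~\ref{lowbougen} with carefully chosen parameters. First I would observe that since $r>1$ and $\alpha\in(0,1)$, the exponent $\beta_{2}=\frac{1-\alpha+r}{1-\alpha}$ satisfies $\beta_{2}>1$, so setting $\gamma=\beta_{2}-1>0$ fits the hypothesis of Lemma~\ref{lowbougen}. Next I would set
\[
A = \left(\int_{E} g^{\beta_{1}}\, d\lambda\right)^{\beta_{2}}, \qquad B = \left(\int_{\mathbb{R}\setminus E} g^{\beta_{1}}\, d\lambda\right)^{\beta_{2}},
\]
so that the function whose infimum over $R\in(0,1)$ is to be computed coincides exactly with the function $F(z)=A/z^{\gamma}+B/(1-z)^{\gamma}$ from Lemma~\ref{lowbougen}.

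For the identity in the first claim, Lemma~\ref{lowbougen} gives
\[
\inf_{R\in(0,1)} \left(\frac{A}{R^{\gamma}}+\frac{B}{(1-R)^{\gamma}}\right) = \left(A^{\frac{1}{1+\gamma}}+B^{\frac{1}{1+\gamma}}\right)^{1+\gamma}.
\]
Since $1+\gamma=\beta_{2}$, one has $A^{1/\beta_{2}}=\int_{E}g^{\beta_{1}}\, d\lambda$ and $B^{1/\beta_{2}}=\int_{\mathbb{R}\setminus E}g^{\beta_{1}}\, d\lambda$, and additivity of the integral then yields $A^{1/\beta_{2}}+B^{1/\beta_{2}}=\int g^{\beta_{1}}\, d\lambda$. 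Raising to the power $\beta_{2}$ gives the desired expression for the infimum.

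For the strict inequality, I would first argue that $\mu(E)\in(0,1)$ forces $A>0$ and $B>0$. Indeed, $\mu(E)>0$ means $\int_{E}g\, d\lambda>0$, so the set $E\cap\{g>0\}$ has positive Lebesgue measure; since $g^{\beta_{1}}>0$ wherever $g>0$, we conclude $\int_{E}g^{\beta_{1}}\, d\lambda>0$, and symmetrically for $\mathbb{R}\setminus E$. Then the second part of Lemma~\ref{lowbougen} applies and yields the unique minimizer
\[
z_{0}=\frac{A^{\frac{1}{1+\gamma}}}{A^{\frac{1}{1+\gamma}}+B^{\frac{1}{1+\gamma}}} = \frac{\int_{E}g^{\beta_{1}}\, d\lambda}{\int_{\mathbb{R}}g^{\beta_{1}}\, d\lambda} = R_{0},
\]
with strict inequality $F(R_{0})<F(R)$ for every $R\in(0,1)\setminus\{R_{0}\}$.

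There is no substantive obstacle here; the proof is essentially a reindexing of Lemma~\ref{lowbougen}. The only two bookkeeping points that need to be checked carefully are the identification $1+\gamma=\beta_{2}$ (so that the exponents match and the sum of $\beta_{1}$-integrals collapses to the full integral) and the implication $\mu(E)>0 \Rightarrow \int_{E}g^{\beta_{1}}\, d\lambda>0$, needed to ensure that both $A$ and $B$ are strictly positive so that the uniqueness clause of Lemma~\ref{lowbougen} can be invoked.
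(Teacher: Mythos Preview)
Your proposal is correct and follows essentially the same route as the paper: both reduce the lemma to Lemma~\ref{lowbougen} with $\gamma=\beta_{2}-1=r/(1-\alpha)>0$, $A=\bigl(\int_E g^{\beta_1}\,d\lambda\bigr)^{\beta_2}$, and $B=\bigl(\int_{\mathbb{R}\setminus E} g^{\beta_1}\,d\lambda\bigr)^{\beta_2}$. Your explicit justification that $\mu(E)\in(0,1)$ forces $A,B>0$ and your verification that $A^{1/\beta_2}+B^{1/\beta_2}=\int g^{\beta_1}\,d\lambda$ are exactly the bookkeeping the paper leaves implicit.
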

\begin{proof}
The assertion follows from Lemma \ref{lowbougen} with  
\[
\gamma = \beta_{2}-1 = r/(1-\alpha ) > 0, 
\quad A= \left(  \int_E g^{\beta_{1}} \, d \lambda \right)^{\beta_{2} }, \text{ and }
B= \left(  \int_{\mathbb{R} \setminus E} g^{\beta_{1}}\,  d \lambda \right)^{\beta_{2} }
\]
(note that  $A>0$ and $B>0$ if $\mu(E) \in (0,1)$).  
\end{proof}

\small

\end{document}